\documentclass[aps,prx,twocolumn,amsmath,nofootinbib,longbibliography,amssymb,superscriptaddress,10pt]{revtex4-2}
\usepackage[margin=1in]{geometry}
\usepackage[english]{babel}
\usepackage{etoolbox}
\usepackage[shortlabels]{enumitem}
\usepackage[scr=boondox]{mathalpha}

\usepackage[margin=1in]{geometry} 
\geometry{a4paper}

\usepackage{graphicx}
\usepackage[caption=false]{subfig}
\graphicspath{
    {figures/}
}


\usepackage{mathtools} 

\usepackage[english]{babel}
\usepackage{amsmath,amsthm,amssymb,amsfonts}
\usepackage[colorlinks=true, pdfstartview=FitV, linkcolor=blue, citecolor=blue, urlcolor=blue]{hyperref}
\usepackage[shortlabels]{enumitem}
\usepackage{bm}
\usepackage{dsfont} 
\usepackage{etoolbox}

\docsvlist{A,B,C,D,E,F,G,H,I,J,K,L,M,N,O,P,Q,R,S,T,U,V,W,X,Y,Z}

\docsvlist{A,B,C,D,E,F,G,H,I,J,K,L,M,N,O,P,Q,R,S,T,U,V,W,X,Y,Z}

\docsvlist{A,B,C,D,E,F,G,H,I,J,K,L,M,N,O,P,Q,R,S,T,U,V,W,X,Y,Z}

\docsvlist{d}

\docsvlist{E,P,R,e,k,p,q,r,g,x,y}

\usepackage{ stmaryrd }
\def\llb{\llbracket}
\def\rrb{\rrbracket}


\usepackage{tikz-cd} 
\usetikzlibrary{matrix,arrows,decorations.pathmorphing}
\usetikzlibrary{fit, shapes.geometric,calc}

\def\bra{\langle}
\def\ket{\rangle}

\def\backslash{\symbol{92}}
\def\ve{\varepsilon}

\DeclareSymbolFont{symbolsC}{U}{txsyc}{m}{n}
\SetSymbolFont{symbolsC}{bold}{U}{txsyc}{bx}{n}
\DeclareFontSubstitution{U}{txsyc}{m}{n}
\DeclareMathSymbol{\multimap}{\mathrel}{symbolsC}{"15}

\def\Tor{\mathrm{Tor}}
\def\sq{\mathrm{sq}}
\def\hon{\mathrm{hon}}
\def\tri{\mathrm{tri}}
\def\hor{\mathrm{h}}
\def\ver{\mathrm{v}}
\DeclareMathOperator{\cone}{cone}
\DeclareMathOperator{\im}{im}

\newcommand{\norm}[1]{\left\lVert#1\right\rVert}
\newcommand{\dnum}[1][1]{\mathds{#1}}

\DeclareMathOperator{\coker}{coker}

\DeclareMathOperator{\supp}{supp}


\newtheorem{theorem}{Theorem}[section]

\newtheorem{prop}[theorem]{Proposition}
\newtheorem{lemma}[theorem]{Lemma}
\newtheorem{OQ}{Open Question}

\theoremstyle{definition}
\newtheorem{definition}[theorem]{Definition}
\newtheorem{example}{Example}[section]
\newtheorem{remark}{Remark}

\marginparsep = 0.3 cm
\marginparwidth = 2 cm



\newcommand{\beq}{\begin{equation}}
\newcommand{\eeq}{\end{equation}}

\usepackage{xcolor}
\newcommand{\red}[1]{{\color{red} #1}}
\newcommand{\gray}[1]{{\color{gray} #1}}
\newcommand{\blue}[1]{{\color{blue} #1}}

\begin{document}
\title{Unified Framework for Quantum Code Embedding}

\author{Andrew C. Yuan}
\affiliation{Condensed Matter Theory Center and Joint Quantum Institute, Department of Physics, University of Maryland, College Park, Maryland 20742, USA}

\begin{abstract}
    Given a Calderbank-Shor-Steane (CSS) code, it is sometimes necessary to modify the code by adding an arbitrary number of physical qubits and parity checks.
    Motivations may include concatenating codes, embedding low-density parity check (LDPC) codes into finite-dimensional Euclidean space, or reducing the weights of parity checks.
    During this embedding, it is essential that the modified code possesses an isomorphic set of logical qubits as the original code. 
    However, despite numerous explicit constructions, the conditions of when such a property holds true is not known in general.
    Therefore, using the language of homological algebra, we provide a unified framework that guarantees a natural isomorphism between the output and input codes.
    In particular, we explicitly show how previous works fit into our framework.
\end{abstract}
\maketitle

\section{Introduction}

The physical realization of a quantum computer is generally plagued by problems such as decoherence and systematic errors in realizing quantum gates.
Quantum error correction (QEC) aims to resolve this issue by utilizing multiple physical qubits in the representation of a single logical qubit.
Stabilizer codes are of particular interest in QEC due to their simple construction using only parity checks, i.e., tensor products of Pauli operators over qubits. 
Codes that involve a small number of physical qubits, e.g., the 9-bit Shor code and the 7-bit Steane code, have been shown to encode a single logical qubit protected against any single qubit error, and thus can be utilized in fault-tolerant quantum computing by concatenating codes \cite{aharonov1997fault,shor1996fault,knill1996concatenated,knill1996threshold}.

Alternatively, codes that involve a large number of qubits can be fault-tolerant without further manipulation provided that they possess low-density parity checks (LDPC) \cite{gottesman2013fault,kovalev2012fault}.
Specifically, the \textit{local} structure of LDPC codes guarantees that if the error rate is $O(1)$-small so that error clusters do not percolate, then the disconnected error clusters can be corrected. 
An important subclass of LDPC codes is \textit{topological} codes, since their error-correcting capability stems from the topology of the underlying manifold.
The seminal 2D toric code \cite{kitaev2003fault,dennis2002topological,raussendorf2007fault}, for instance, is constructed on a discretized torus with $n$ qubits placed on edges of the lattice.
The existence of non-contractible loops on the torus then guarantees that the code distance scales as $d=\Theta(\sqrt{n})$.
The toric code, however, is far from optimal, as it encodes only a constant number of logical qubits $k=\Theta(1)$.
Only recently has there been a significant breakthrough in finding \textit{good} LDPC codes -- those for which the code parameters $k,d$ scale linearly in $n$ \cite{panteleev2022asymptotically,leverrier2022quantum,dinur2023good}.

In the study of stabilizers, it is sometimes necessary to modify an input code by adding ancillas in a certain manner to obtain an output code with further desirable properties.
In such cases, despite being technically distinct codes, the logical subspaces are preserved in some natural manner and thus may be viewed as \textit{equivalent} codes.
One of the most famous examples would be code concatenation \cite{aharonov1997fault,shor1996fault,knill1996concatenated,knill1996threshold}, in which the physical qubits of code $A$ are replaced by the logical qubits of another code $B$\footnote{$B$ is usually chosen to be $n$ copies of $A$ where $n$ is the number of physical qubits in $A$} to obtain the output code $C$.  
From an operator perspective, this is equivalent to substituting the physical Paulis $Z_i,X_i$ of code $A$ with corresponding nontrivial logical representations $\bar{Z}_i,\bar{X}_i$ of code $B$, where the former, latter indices $i$ denote physical, logical qubits of code $A, B$, respectively.
During this embedding $A\mapsto C$, the logicals are thus expected to be naturally preserved.

Another famous example would be topological codes defined on the same manifold but constructed from different discretizations.
Since the logicals are related to the (singular) homology of the manifold, these codes are supposed to exhibit naturally isomorphic logicals. 
While the intuition is straightforward, a rigorous proof requires relatively in depth knowledge of singular, simplicial, and CW homology (see, e.g., Chap. 2 of \cite{hatcher2005algebraic}).
The issue becomes more subtle when boundary conditions are considered.
A notable example is the surface code variant of the toric code \cite{bravyi1998quantum, kitaev2012models}. 
Although such a code supports logical string operators reminiscent of that on a torus, it cannot be realized by any 2D CW complex\footnote{The $0$-homology of the surface code is 0, while any CW complex on a connected manifold must have rank 1 $0$-homology}, and thus whether similar techniques in homology can be applied becomes less apparent.

For more general codes, the question becomes increasingly relevant.
In recent year, explicit embeddings which preserve the logical qubits have been proposed for various application purposes, often in an attempt to modify the exotic good LDPC codes into more practical forms.
Hastings, for example, proposed a general method for reducing weights in quantum codes \cite{hastings2016weight,hastings2021quantum}.
Although LDPC codes are defined by having parity checks with $O(1)$ weight, in practice, this constant can still be large; therefore, reducing the check weight can help lower measurement errors.
This research also connects closely to fault-tolerant logical measurements \cite{williamson2024low,ide2025fault,cross2024improved,horsman2012surface}. 
Specifically, if one wishes to measuring a logical operator $\ell^\star$, direct measurement is not feasible nor fault-tolerant due to $\ell^\star$'s large weight ($\ge d$ code distance), making indirect measurement through a finite number of local operations preferable. 
This concept is equivalent to artificially adding $\ell^\star$ as a parity check to the original code, and subsequently reducing the weight of $\ell^\star$.

Another line of work involves embedding LDPC codes in $D$-dimensional Euclidean space \cite{williamson2024layer,portnoy2023local,lin2023geometrically}.
From a practical standpoint, LDPC codes are local in the sense of an exotic adjacency graph \cite{gottesman2013fault}, posing challenges for physical implementation.
This has motivated efforts to realize embeddings in Euclidean space, especially in low dimensions $D=3$.
Additionally, any local code in Euclidean space is subject to the Bravyi–Poulin–Terhal (BPT) bounds \cite{bravyi2009no,bravyi2010tradeoffs}, which constrain the code parameters as $k = O(n^{(D-2)/D})$ and $d=O(n^{(D-1)/D})$.
For instance, the toric code saturates the BPT bounds in $D=2$, but not in higher dimensions.
Hence, whether the BPT bounds can be saturated is also of theoretical interest.
This has been confirmed possible by combining recent discoveries of good LDPC codes \cite{panteleev2022asymptotically,leverrier2022quantum,dinur2023good} with optimal embeddings in $D \ge 3$ dimensions \cite{williamson2024layer,portnoy2023local,lin2023geometrically}.

In any case, the existing embeddings rely on explicit constructions and thus the general conditions sufficient to guarantee a natural isomorphism between logicals are unknown.
The main result of our manuscript is thus to address this issue by providing a unified framework, which, in particular, applies to all existing examples.
Our focus will be on Calderbank-Shor-Steane (CSS) codes, as any $[n,k,d]$ stabilizer code can be locally mapped to a $[4n,2k,2d]$ CSS code \cite{bravyi2010majorana}.
In the subsequent sections, we demonstrate how various known constructions naturally fit within this framework. 
As a byproduct, many of the proofs found in prior works are rendered more explicit and conceptually streamlined\footnote{
For example, the framework provides a straightforward Cleaning Lemma \eqref{lem:cleaning} which can be used to show that the embedding preserves the code distance, and unifies the proofs of Theorem \eqref{thm:layer-code-distance} \cite{williamson2024layer}, Theorem \eqref{thm:square-distance} \cite{lin2023geometrically} and Proposition \eqref{prop:weight-cone-distance} \cite{hastings2021quantum,williamson2024low}.
}.
Since we claim the framework to be natural, we discuss how the result can be regarded as a generalization of code concatenation -- Remark \eqref{rem:main}, and can be utilized for topological codes directly, without knowledge of the point-set topology of the underlying manifold -- Section \eqref{sec:topology}.
This generality also allows the inclusion of boundary conditions within the same unified treatment.
Since the upcoming sections primarily concern applications, we have structured the manuscript such that Sections \eqref{sec:topology}-\eqref{sec:weight} may be read independently, assuming familiarity with the preliminaries outlined in Section \eqref{sec:prelim} and the main result -- Theorem \eqref{thm:height-2}-\eqref{thm:height-n} and Cleaning Lemma \eqref{lem:cleaning}.




\subsection{High-Level Overview}

The framework, in its simplest form, can be regarded as a generalization of code concatenation.
As previously discussed, code concatenation replaces the physical qubits/Paulis of code $A$ with the logical qubits/operators of code $B$ to obtain output code $C$.
Note that CSS codes can be understood as $\dF_2$ chain complexes (see Section \eqref{sec:prelim}). 
For example, code $A$ can be written as $A_2 \to A_1 \to A_0$ where $A_2,A_1,A_0$ denotes the $Z$-type operators, qubits, and $X$-type operators, respectively, and the equivalence classes of $Z$- and $X$- type logicals correspond to the first homology $H_1(A)$ and cohomology $H^1(A)$, respectively.
The embedding procedure $A\mapsto C$ can be thus abstractly regarded as requiring\footnote{Since $H_1(B)\cong H^1(B)$, we omit the cohomology requirement.} 
\begin{equation}
    \label{eq:code-conc-requirement}
    A_1 = H_1(B)
\end{equation}

In the context of code concatenation, the naive embedding $A\mapsto C$ is sufficient since code $A$ is small in number of qubits and $B$ is just copies of $A$, one for each physical qubit\footnote{
Each hierarchy of concatenation results in the physical error rate to $p\mapsto p^2$, while the parity check weights $w\mapsto Cw$ where $C$ is a constant, so that $\sim \log \log (1/\ve)$ levels are necessary to reach an error rate of $\ve \ll 1$ and thus the parity check weights are at most logarithmic $\sim \log (1/\ve)$.
}.
In general, however, code $B$ could have a large code distance and thus the logicals $\bar{Z},\bar{X}$ must be large in weight.
This implies that even if $A$ is LDPC, the naive embedding does not guarantee that $C$ is still LDPC despite having the logical subspace preserved.
This is undesirable since LDPC is an essential property that guarantees fault-tolerance for large codes \cite{gottesman2013fault,quintavalle2021single} and thus a generalization is necessary.

Our framework thus generalizes the requirement in Eq. \eqref{eq:code-conc-requirement} by requiring that
\begin{equation}
    \label{eq:main-requirement}
    A_i = H_i(C^i), \quad i=2,1,0
\end{equation}
Where $C^i$ are independent CSS codes. 

A high-level way to understand the reasoning behind the generalization is by considering the following example.
As previously discussed, the naive requirement $A_1=H_1(C^1)$ may result in large check weights corresponding to $A_2,A_0$.
Consider a connected graph with vertices $\sV$, edges $\sE$ and generating cycles/faces $\sF$ so that the adjacency relation induces a complex $G=V\to E\to F$. 
Although the graph/complex can be local/LDPC, the homology $H_2(G)$ corresponds to the connected component with large weight $|\sV|$ and thus the requirement that $A_2 =H_2(G)$ (and similarly for $A_0$) aims to replace the large weight parity checks with local graphs so that the overall output code $C$ is LDPC.

\subsection{Main Results}


\begin{theorem}[Height-2 Cone]
    \label{thm:height-2} Let 
    \begin{equation}
    C=C_{2} \xrightarrow{\partial} C_{1} \xrightarrow{\partial} C_{0}
    \end{equation}
    be a $\dF_2$ chain complex such that each degree is the direct sum of $\dF_2$ vector spaces $C_{i} = C^{2}_{i} \oplus C^{1}_{i} \oplus C^{0}_{i}$, and the differential $\partial:C_{i} \to C_{i-1}$ is lower-triangular with respect to the decomposition, with matrix elements denoted as
    \begin{equation}
    \partial = 
    \begin{pmatrix}
        \partial^{2} &  &\\
        g_2 & \partial^{1} & \\
        p & g_1 & \partial^{0}
    \end{pmatrix}
    \end{equation}
    Then the following are $\dF_2$ chain complexes
    \begin{align}
        \label{eq:row-complex}
        C^{s} &= C^{s}_{2} \xrightarrow{\partial^s} C^{s}_{1} \xrightarrow{\partial^s} C^{s}_{0}, \quad s=2,1,0 \\
        \label{eq:column-complex}
        C^{\bg} &=H_2(\partial^{2}) \xrightarrow{[g_2]} H_1(\partial^{1}) \xrightarrow{[g_1]} H_0(\partial^{0})
    \end{align}
    Where $H_i(\partial^{s}),H_i(\partial^{\bg})$ are the homologies of $C^{s},C^{\bg}$, respectively, and  $[g_2],[g_1]$ are induced\footnote{
    The notation follows naturally if we write $\ell^{s}\mapsto [\ell^{s}]$ as the quotient map from $C^{s}_{i} \to H(\partial^{s})_{i}$.
    Specifically, the notation permits the concise relation $[g_2][\ell^2]=[g_2 \ell^2]$. 
    Equivalently, mapping an equivalence class $[\ell^2]\mapsto [g_2][\ell^2]$ is equal to first choosing any representation $\ell^2$, map it under the original map $\mapsto g_2 \ell^2$, and then obtain the equivalence class $\mapsto [g_2\ell^{2}]$.
    The case is similar for $[g_1]$.
    } from $g_2,g_1$ via the quotient maps.
    Refer to $C^{\bg}$ as the \textbf{embedded} complex within $C$ and $C$ as the \textbf{(height-2) cone} with \textbf{levels} $C^{s},s=2,1,0$.
    Moreover, if $H_1(\partial^{0}) =H_1(\partial^{2}) =0$ -- which we refer to as $C$ being \textbf{regular} (with respect to degree $1$) -- then $H_1 (\partial^{\bg}) \to H_1 (\partial)$ are isomorphic via the following map
    \begin{equation}
        \label{eq:height-2-cone-iso}
        \llb \ell^{1}_{1} \rrb \mapsto \left[
        \begin{pmatrix} 
        0 \\ \ell^{1}_{1} \\ \ell^{0}_{1}
        \end{pmatrix} \right]
    \end{equation}
    where $\ell^{1}_{1} \in C^{1}_{1}$ is a representation of $[\ell^{1}_{1}]\in H_1(\partial^{1})$, which is a further representation of $\llb \ell^{1}_{1}\rrb \in H_1(\partial^{\bg})$, and $\ell^{0}_{1} \in C^{0}_{1}$ is such that $g_1 \ell^{1}_{1}+\partial^{0} \ell^{0}_{1} =0$ (whose existence is guaranteed), and the right-hand-side is the equivalence class of the given element in $C_{1}$.

\end{theorem}

\begin{proof}
    See Appendix \eqref{proof:height-2}.
\end{proof}
\begin{remark}
\label{rem:main}
Let us first understand the statement (see also Remark \eqref{rem:necessity-of-generalized-cones}) and how it fits the high-level overview.
Since the matrix elements are $\dF_2$-linear maps from $C^{s}_{i} \to C^{r}_{i-1}$, we rewrite the $\partial$ as the following diagram, so that the diagonals represent the decomposition of $C_{i}$ and arrows denote the linear maps
\begin{equation}
\label{eq:height-2-diagram}
\begin{tikzpicture}[baseline]
\matrix(a)[matrix of math nodes, nodes in empty cells, nodes={minimum size=25pt},
row sep=2em, column sep=2em,
text height=1.25ex, text depth=0.25ex]
{&& \red{C^{2}_{2}}  & \red{C^{2}_{1}} & \red{C^{2}_{0}}\\
& \gray{C^{1}_{2}}  & \gray{C^{1}_{1}}  & \gray{C^{1}_{0}} &\\
\blue{C^{0}_{2}} & \blue{C^{0}_{1}} & \blue{C^{0}_{0}} &&\\};
\path[->,red,font=\scriptsize]
(a-1-3) edge node[above]{$\partial^{2}$} (a-1-4)
(a-1-4) edge node[above]{$\partial^{2}$} (a-1-5);
\path[->,gray,font=\scriptsize]
(a-2-2) edge node[above]{$\partial^{1}$} (a-2-3)
(a-2-3) edge node[above]{$\partial^{1}$} (a-2-4);
\path[->,blue,font=\scriptsize]
(a-3-1) edge node[above]{$\partial^{0}$} (a-3-2)
(a-3-2) edge node[above]{$\partial^{0}$} (a-3-3);
\path[->,font=\scriptsize]
(a-1-3) edge[bend right=65, dashed] node[left]{$p$} (a-3-2)
(a-1-4) edge[bend left=65, dashed] node[right]{$p$} (a-3-3);
\path[->,font=\scriptsize]
(a-1-3) edge node[right]{$g_2$} (a-2-3)
(a-1-4) edge node[right]{$g_2$} (a-2-4)
(a-2-2) edge node[right]{$g_1$} (a-3-2)
(a-2-3) edge node[right]{$g_1$} (a-3-3);
\draw[dashed] ($(a-1-3.north west) +(-0.1,0)$) rectangle ($(a-3-3.south east) +(0.1,0)$); 
\draw[dashed, opacity=0.3]
  ($(a-1-4.north west) + (.1,0)$) -- 
  ($(a-1-4.north east) + (-0.05,0)$) -- 
  ($(a-1-4.south east) +(-0.05,0.1)$) -- 
  ($(a-3-2.south east) +(-0.1,0)$) -- 
  ($(a-3-2.south west) +(0.05,0)$) -- 
  ($(a-3-2.north west) +(0.05,-0.1)$) -- cycle;
\end{tikzpicture}
\end{equation}
    
Since $\partial\partial=0$, it's straightforward to check that each row is a complex, i.e., $\partial^{a}\partial^{a}=0$, and that the squares are commutative, i.e., $g_2 \partial^{2} = \partial^{1} g_2$ and $g_1 \partial^{1} = \partial^{0} g_1$.
This implies that $g_2,g_1$ are chain maps and induces  maps $[g_2],[g_1]$ on the homologies.
The condition $\partial \partial =0$ also implies that $g_1 g_2$ is chain-homotopic to 0, i.e., $g_1 g_2 =  \partial^{0} p+ p\partial^{2}$ so that $p$ is a chain homotopy.
This then implies that $[g_1] [g_2] =0$ and thus the embedded column sequence $C^{\bg}$ is a well-defined chain complex.

If the embedded column $C^{\bg}$ and the rows $C^{s}$ are regarded as CSS codes, then the $Z$-type operators, qubits, and $X$-type operators of $C^{\bg}$ correspond to the homologies of $C^2,C^1,C^0$, respectively, as required by Eq. \eqref{eq:main-requirement}.
The theorem then essentially claims that if codes $C^{2},C^{0}$, which induce the parity checks of code $C^{\bg}$, do not have internal logical operators, i.e., $H_1(\partial^{2}) = H_1(\partial^{0})=0$, then the embedded code $C^{\bg}$ and code $C$ have isomorphic logical operators.
\end{remark}

\begin{remark}
\label{rem:necessity-of-generalized-cones}
We note that our main result is a generalization of the conventional (height-1) mapping cone \cite{weibel1994introduction}. 
In fact, it is an iterative application.
Specifically, if we regard $C_{i}^{2} \oplus C_{i}^{1}$ (or $C_{i}^{1} \oplus C_{i}^{0}$) as a single $\dF_2$-space for each $i$, the decomposition of $\partial$ is a $2\times 2$ block-triangular matrix and thus a conventional cone. 
Interesting, despite this straightforward generalization, only the sub-diagonal matrix elements (i.e., $g_2,g_1$) appear in the embedded code $C^{\bg}$ under general conditions.
This fact continues to hold true even for general height-$n$ cones as demonstrated in Theorem \eqref{thm:height-n}.

In terms of application, height-1 cones were first employed in Hastings' quantum weight reduction \cite{hastings2021quantum} and later identified as a framework for logical operator measurement \cite{ide2025fault}. 
In hindsight, the sufficiency of height-1 cones is somewhat intuitive in the latter context, since the removal of a logical operator is a \textit{classical} operation and classical codes can be understood as complexes of length 1 -- see Example \eqref{ex:classical}.
However, for general \textit{quantum} embeddings, a generalization to height-$(\ge 2)$ is necessary.
\end{remark}

As remarked, the main result can be straightforwardly generalized to height-$n$ cones\footnote{Further generalization to $R$-modules is straightforward. This may prove useful in considering translationally invariant CSS codes \cite{haah2013commuting} where $R$ is the Laurent polynomial ring or codes over general $q$-dimensional qudits so that $R =\dZ_q$, though it then becomes necessary to keep track of signs $\pm 1$ in front of matrix elements of $\partial$.}.

\begin{theorem}[Height-$n$ Cone]
    \label{thm:height-n} Let
    \begin{equation}
        C =\cdots \to C_{i} \xrightarrow{\partial} C_{i-1} \to \cdots
    \end{equation}
    be a chain complex over $\dF_2$ such that each degree $C_{i}$ is the direct sum $C_{i} = C^{n}_{i} \oplus C^{n-1}_{i} \oplus \cdots \oplus C^{0}_{i}$, and the differential $\partial$ is lower-triangular with matrix elements
    \begin{equation}
    \partial = 
    \begin{pmatrix}
        \partial^{n} &  &&&\\
        g_n & \partial^{n-1} &&& \\
        \cdot & g_{n-1} & \partial^{n-2} && \\
        \cdot & \cdot &\cdot & \ddots \\
        \cdot & \cdot & \cdot & g_{1} & \partial^{0}
    \end{pmatrix}
    \end{equation}
    where only the diagonal and sub-diagonal elements are explicitly shown.
    Then the following are $\dF_2$ chain complexes
    \begin{align}
        \label{eq:row-complex-R}
        C^{s} &= \cdots \to C^{s}_{i} \xrightarrow{\partial^{s}} C^{s}_{i-1} \to \cdots, \quad s=0,...,n\\
        \label{eq:column-complex-R}
        C^{\bg} &=H_n(\partial^{n}) \xrightarrow{[g_n]} H_{n-1}(\partial^{n-1}) \to \cdots \xrightarrow{[g_{1}]} H_{0}(\partial^{0})
    \end{align}
    Where $H_i(\partial^{s}),H_i(\partial^{\bg})$ are the homologies of $C^{s},C^{\bg}$, respectively, and $[g_s]$ are induced from $g_s$ via the quotient maps.
    We refer to $C^{\bg}$ as the \textbf{embedded} code within $C$ and $C$ as the \textbf{(height-n) cone} of $C^{\bg}$ consisting of \textbf{levels} $C^{s}$.
    
    Moreover, given fixed $0\le m\le n$, if $H_i(\partial^{s}) =0$ for all $0 \le s < i\le m$ and $m\le i < s\le n$, or equivalently, all spaces explicitly shown in the following diagram (irrelevant $C^{s}_{i}$ and arrows are omitted) except for the column sequence has trivial homology -- a property we refer to as being \textbf{regular} with respect to degree $m$,
    \begin{equation}
    \label{eq:height-n-diagram}
    \begin{tikzpicture}[baseline]
    \matrix(a)[matrix of math nodes, nodes in empty cells, nodes={minimum size=5mm},
    row sep=1.5em, column sep=2em,
    text height=1.5ex, text depth=0.25ex]
    { & & C^{n}_{n}  & \cdot & C^{n}_{0} & \\
      & & \cdot  & \cdot & \\
      & & C^{m}_{m} &  & \\
      & \cdot & \cdot &&\\
    C^{0}_{m} & \cdot & C^{0}_{0} &&\\
    };
    \path[->,font=\scriptsize]
    (a-1-3) edge node[above]{$\partial^{n}$} (a-1-4)
    (a-1-4) edge node[above]{$\partial^{n}$} (a-1-5)
    (a-2-3) edge (a-2-4)
    (a-4-2) edge (a-4-3)
    (a-5-1) edge node[above]{$\partial^{0}$} (a-5-2)
    (a-5-2) edge node[above]{$\partial^{0}$} (a-5-3)
    (a-1-3) edge node[right]{$g_n$} (a-2-3)
    (a-1-4) edge (a-2-4)
    (a-2-3) edge node[right]{$g_{m+1}$} (a-3-3)
    (a-3-3) edge node[right]{$g_{m}$} (a-4-3)
    (a-4-2) edge  (a-5-2)
    (a-4-3) edge node[right]{$g_{1}$} (a-5-3);
    \draw[dashed] (a-1-3.north west) rectangle (a-5-3.south east); 
    \draw[dashed, opacity=0.4]
      ($(a-1-5.north west) + (.1,0)$) -- 
      ($(a-1-5.north east) + (-0.05,0)$) -- 
      ($(a-1-5.south east) +(-0.05,0.1)$) -- 
      ($(a-5-1.south east) +(-0.1,0)$) -- 
      ($(a-5-1.south west) +(0.05,0)$) -- 
      ($(a-5-1.north west) +(0.05,-0.1)$) -- cycle;
    \end{tikzpicture}
    \end{equation}
    then $H_m (\partial^{\bg}) \mapsto H_m (\partial)$ are isomorphic via $\llb \ell^{m}_{m} \rrb \mapsto [\ell^{m}_{m} \oplus \ell^{m-1}_{m} \oplus \cdots \oplus \ell^{0}_{m}]$ where $\ell^{m}_{m}$ is a representation of $[\ell^{m}_{m}]\in H_m(\partial^{m})$, which is a further representation of $\llb \ell^{m}_{m}\rrb \in H_0(\partial^{\bg})$, and $\ell^{s}_{i}\in C^{s}_{i}$ are chosen so that $\ell^{m}_{m} \oplus \cdots \oplus \ell^{0}_{m} \in \ker \partial^{\bg}$ (whose existence is guaranteed).
    If the cone is regular with respect to all degrees $0,...,n$, then we say that the cone is \textbf{regular}.
\end{theorem}

\begin{proof}
    See Appendix \eqref{proof:height-n}.
\end{proof}
\begin{remark}
    Note that if $C_m^{m-1}=\cdots =C^0_{m}=0$ (so that Diagram \eqref{eq:height-n-diagram} is \textit{halved}\footnote{See, for example, Diagram \eqref{eq:barycenter-diagram}}), then the isomorphism is simplified to inclusion $\llb \ell_m^m\rrb \mapsto [\ell_m^m]$.
\end{remark}

In addition to a natural isomorphism, the framework also provides the following cleaning lemma that unifies Ref. \cite{williamson2024layer,lin2023geometrically,hastings2016weight,williamson2024low} and guarantees a relation between the code distances of the cone and its embedded code.
We elaborate on its application in the following sections -- specifically, Theorem \eqref{thm:layer-code-distance}, Theorem \eqref{thm:square-distance} and Proposition \eqref{prop:weight-cone-distance}.
It's worth mentioning that shortly after our manuscript, more cleaning lemmas have also been independently discovered by Refs. \cite{baspin2025fast,cowtan2025fast,zheng2025high}, albeit in a more restricted context that also falls under our framework.

\begin{lemma}[Cleaning]
    \label{lem:cleaning}
    Let $C$ be a height-2 cone with levels $C^{s},s=2,1,0$ and embedded complex $C^{\bg}$ and let $C^{s}$ be equipped with basis\footnote{See Definition \eqref{def:basis} and \eqref{def:cone}}. 
    Suppose that there exists $1\ge \alpha>0$ such that for any $\ell_2^{2}\in C_2^2$, there exists $\hat{\ell}_2^2\in C_2^2$ such that $\partial^2 \ell_2^{2} =\partial^2 \hat{\ell}_2^2$ and
    \begin{equation}
        \label{eq:isoperimetric}
        |\partial^2 \hat{\ell}_2^2|  \ge \alpha |g_2\hat{\ell}_2^2|
    \end{equation}
    Then
    \begin{equation}
        \label{eq:distance}
        \min_{\ell_1\in C_1: [\ell_1] \in H_1(\partial)\backslash 0} |\ell_1| \ge \alpha  \min_{\ell_1^1\in C_1^1:\llb \ell_1^1\rrb \in H_1(\partial^{\bg})\backslash 0} |\ell_1^1|
    \end{equation}
    In analogy to geometry \cite{tillich2000edge}, we refer to the condition in Eq. \eqref{eq:isoperimetric} as $(\partial^2,g_2)$ satisfying the \textbf{isoperimetric inequality} with \textbf{(isoperimetric) coefficient} $\alpha$.
    In particular, if $C^2$ is the direct sum of complexes equipped with basis\footnote{See Definition \eqref{def:direct-sum} for the convention of basis}, i.e., $C^2 = \bigoplus_{a} C^{2a}$ with $\partial^2 = \bigoplus_a \partial^{2a}$ and $(\partial^{2a},g_2)$ satisfies the isoperimetric inequality with coefficient $1\ge \alpha>0$ for all $a$, then Eq. \eqref{eq:distance} holds.
\end{lemma}
\begin{proof}
    See Appendix \eqref{proof:cleaning}
\end{proof}
\begin{remark}
    Regarding $C,C^{\bg}$ as CSS codes, we see that the left-hand-side (LHS) denotes the ($Z$-type) code distance $d_Z$ (see Example \eqref{ex:CSS}) of the cone $C$.
    The Cleaning Lemma then essentially states that the weight of $\ell_1$ can be \textit{cleaned} so that only its component in $C_1^1$ is relevant.
    Moreover, a lower bound of the code distance can be obtained by minimizing over representations $\ell_1^1\in C_1^1$ of nontrivial logical operators in the embedded code $C^{\bg}$.

    It's worth mentioning that the Cleaning Lemma is somewhat straightforward due to the isomorphism in Eq. \eqref{eq:height-2-cone-iso}, which states that every logical equivalence class has a representation which has 0 overlap with $C^2$, and thus the assumption in Eq. \eqref{eq:isoperimetric} aims to balance the weight of this representation with all possible representations.
\end{remark}

    


\subsection{Open Questions}
\begin{OQ}
Ref. \cite{portnoy2023local} provides a general construction of optimally embedding any LDPC code which admits a sparse $\dZ$ lift into Euclidean space.
Can the requirement of a sparse $\dZ$ lift be removed?
\end{OQ}

Roughly speaking, Ref. \cite{portnoy2023local} relies on the relatively advanced fact that if an LDPC code has a \textit{sparse $\dZ$ lift}, then the code can induce a triangulation on an 11D manifold \cite{freedman2021building}.
Further (barycentric) subdivisions of the 11D manifold can then be embedded into Euclidean space to induce a geometrically local quantum code.
However, since stabilizer codes are $\dF_2$ chain complexes, the requirement to first sparsely lift into $\dZ$, sudivide and then return back to $\dF_2$ chain complexes seems redundant in the application of quantum code embedding.
Indeed, in Section \eqref{sec:topology}, we show that our main result can be utilized to circumvent the necessity of an underlying manifold when subdividing a chain complex, and thus it raises the question whether the sparse $\dZ$ lift requirement can be removed.

\begin{OQ}
In this manuscript, we mostly focus on quantum embedding, in the sense that, given a CSS code $A$, we construct a mapping cone $C$ such that the embedded code is $A$.
However, can the converse be obtained, i.e., given an arbitrary CSS code $C$, does there exist a canonical embedded code $A$ within $C$?
\end{OQ}

\begin{OQ}
What are some general conditions in which the isoperimetric inequality \eqref{eq:isoperimetric} holds?
\end{OQ}

Despite the unified framework and corresponding Cleaning Lemma, proving the isoperimetric inequality is often a difficult task.
Section 5.4.1 of Ref. \cite{lin2023geometrically} builds upon Ref. \cite{tillich2000edge} and shows that the isoperimetric inequality is preserved under hypergraph products, which can be exploited if $\partial^2,g_2$ have well-behaved product structures of graphs.
However, the method only provides a lower bound on the isoperimetric coefficient, which may not necessarily be tight in many cases\footnote{For example, one may consider the hypergraph product of 1D chains of distinct lengths $L_1,L_2$, as given in Section 5.4.1 of Ref. \cite{lin2023geometrically}. If, say, $L_1 \gg L_2$, the lower bound on the coefficient will be $L_2/L_1\ll 1$.}.
Further insight into the isoperimetric inequality may be necessary for novels applications of quantum code embedding.

\medskip
\noindent
\textbf{Acknowledgments.}
We thank Jeongwan Haah for bringing to our attention Hastings' work on quantum weight reduction and thank Dominic J. Williamson for bringing up the topic of fault-tolerant logical measurement.
We thank Matthew B Hastings for general helpful discussions.
ACY was supported by the Laboratory for Physical Sciences at CMTC.
\section{Preliminaries}
\label{sec:prelim}
\subsection{Stabilizer and CSS codes}
\begin{definition}
A \textbf{stabilizer code} on finite $n$ qubits $\sH \cong(\dC^2)^{\otimes n}$ is defined as the subgroup $\sS$ of the Pauli group $\sP$, i.e., that generated by Pauli operators on $n$ qubits  and phase factors $\pm i,\pm 1$, such that $\sS$ is commuting and $-I\notin S$.
The subspace of \textbf{logical qubits} is that $V_\sS \subseteq \sH$ \textit{stabilized} by $\sS$, i.e., states $\psi \in \sH$ such that $s \psi = \psi$ for all $s \in \sS$.
The group $\sS^\perp$ of \textbf{logical operators} is the subgroup of $\sP$ which commutes with all operators in $\sS$, and the group of \textbf{equivalence classes of logical operators} by the quotient group $\sS^\perp/\bra i,\sS\ket$ where $\bra i,\sS\ket$ is the subgroup in $\sP$ generated by the phase factor $i$ and $\sS$.
We say that $\ell \in \sS^\perp$ is a \textbf{nontrivial} logical operator if its equivalence class is nonzero in $\sS^\perp/\bra i,\sS\ket$, i.e., $\ell \notin \bra i,\sS\ket$.
\end{definition}

While our primary interest lies in the error-protecting properties of the logical qubits, it is more convenient to examine the properties of the stabilizer code $\sS$ and its logical operators instead \cite{calderbank1997quantum,gottesman1996class,nielsen2010quantum}.
In fact, we further simplify the problem and study the abelianization of $\sP$.

\begin{definition}
\textbf{Abelianization} of the Pauli group $\sP$ is the homomorphism $\sA: \sP \to \dF_2^{2n}$ defined as follows: if $s \in \sP$ acts as $X,Y,Z$ on qubit $1\le i\le n$  (up to a phase), then its abelianization has elements $(1,0),(1,1),(0,1)$ at components $i,i+n$ in $\dF_2^{2n}$, respectively.
We call the image of the Pauli group $\sP$ under abelianization the \textbf{Pauli space} $P =\dF_2^{2n}$. 
To represent the commutation relations of elements in the Pauli group, we equip the Pauli space with the symplectic form 
\begin{equation}
    \lambda = \begin{pmatrix} & I \\I &\end{pmatrix}
\end{equation}
We say that $\ell,\ell'\in P$ are \textbf{commuting}, \textbf{anti-commuting} if $\ell \lambda \ell'=0,1$, respectively, and thus $\ell,\ell'\in \sP$ commute/anti-commute iff they commute/anti-commute after abelianization.
Moreover, given a subspace $S\subseteq P$, let $S^\perp$ denote the orthogonal complement of $S$ with respect to the symplectic form, i.e., all elements in $P$ commuting with $S$.
\end{definition}

\begin{remark}
\label{rem:abelian-correspondence}
Note that if $\sS$ is a stabilizer code, then its abelianization $S\equiv \sA \sS$ is a commuting subspace of the Pauli space such that $\sA:\sS\to S$ is an isomorphism.
Conversely, given a commuting subspace $S$ of the Pauli space $P$, there exists a commuting subgroup $\sS$ of $\sP$ with $-I\notin \sS$ such that $S$ is its abelianization.
Also note that $\sA \sS^\perp = (\sA \sS)^\perp = S^\perp$ for stabilizer codes $\sS$. 
In particular, $\sS^\perp/\bra i,\sS\ket \cong S^\perp/S$, and thus we will not differentiate between the two.
Moreover, since abelianization uniquely determines the Pauli operator up to a phase $\pm i,\pm 1$, we abuse terminology and also call $\ell\in S^\perp$ \textbf{logical operators} -- it is \textbf{nontrivial} if $\ell \notin S$.
\end{remark}

\begin{lemma}[Chap 10 of \cite{nielsen2010quantum}]
\label{lem:kd-correspondence}
Let $\sS$ be a stabilizer code on $n$ qubits with abelianization $S$ and $\dim S =n-k$. Define \textbf{code distance} $d$ as the minimum (Hamming) weight $|\ell|$ over nontrivial $\ell \in S^\perp\backslash S$.
Then $\dim V_\sS = 2^k$ and any error acting on $<d$ qubits can be corrected in the sense of Theorem 10.1 of Ref. \cite{nielsen2010quantum}.
\end{lemma}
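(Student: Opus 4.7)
The plan is to split the lemma into its two claims --- the dimension count $\dim V_\sS = 2^k$ and the error-correction statement --- which require different techniques but both rest on the abelianization dictionary set up in Remark \ref{rem:abelian-correspondence}.

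For the dimension count, I would build the orthogonal projector onto $V_\sS$ explicitly and compute its trace. Since $\sS$ is an abelian subgroup of $\sP$ with $-I\notin\sS$, every $s\in\sS$ is a Hermitian involution, and the averaged operator
\begin{equation}
    \Pi_\sS \;=\; \frac{1}{|\sS|}\sum_{s\in\sS} s
\end{equation}
is Hermitian and idempotent with image precisely the common $+1$ eigenspace $V_\sS$. Every $s\neq I$ in $\sS$ is (up to the sign $+1$ fixed by $-I\notin\sS$) a nontrivial tensor product of single-qubit Paulis, hence traceless, while $\tr I = 2^n$. Therefore $\dim V_\sS = \tr \Pi_\sS = 2^n/|\sS|$, and Remark \ref{rem:abelian-correspondence} identifies $\sS$ with the $\dF_2$-vector space $S$ of dimension $n-k$, giving $|\sS|=2^{n-k}$ and $\dim V_\sS=2^k$.

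For the correction claim, I would invoke the Knill--Laflamme criterion (Theorem 10.1 of Ref. \cite{nielsen2010quantum}) and translate it into the stabilizer language: a set $\{E_a\}\subseteq\sP$ of errors is correctable on $V_\sS$ if and only if for every $a,b$ the product $E_a^\dagger E_b$ either lies in $\bra i,\sS\ket$ (so it acts as a scalar on the code) or fails to commute with some element of $\sS$ (so that its syndrome separates $a$ from $b$). In the abelianized language this reads $\sA(E_a^\dagger E_b)\in S$ or $\sA(E_a^\dagger E_b)\notin S^\perp$. Reading the weight hypothesis in the standard way that each $|E_a|<d/2$, the triangle inequality $|\sA(E_a^\dagger E_b)|\le|\sA E_a|+|\sA E_b|<d$ together with the definition of $d$ as the minimum weight over $S^\perp\setminus S$ forces any element of $S^\perp$ of weight $<d$ to already lie in $S$. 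One of the two Knill--Laflamme alternatives therefore always holds, and correctability follows.

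The main obstacle I expect is the phase bookkeeping between $\sP$ and its abelianization $P$: the map $\sA$ forgets the phase $\pm1,\pm i$, so each step that uses a weight inequality or a containment in $S,S^\perp$ has to be lifted back to a statement about $\sS$ and $\bra i,\sS\ket$ without information loss. Remark \ref{rem:abelian-correspondence} does exactly this --- giving both the isomorphism $\sA:\sS\to S$ and the induced isomorphism $\sS^\perp/\bra i,\sS\ket\cong S^\perp/S$ --- so the rest of the argument reduces to the trace computation above and a direct invocation of Theorem 10.1 of Ref. \cite{nielsen2010quantum} as a black box.
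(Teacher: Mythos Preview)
The paper does not supply its own proof of this lemma; it is stated as a citation to Chapter~10 of Nielsen--Chuang and used as a black box. Your sketch is precisely the standard textbook argument found there --- the projector-trace computation for $\dim V_\sS$ and the Knill--Laflamme reduction for correctability --- so there is nothing to compare against.

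One small point worth flagging: the lemma as written says ``any error acting on $<d$ qubits can be corrected,'' which you silently reinterpret as ``each $|E_a|<d/2$.'' Your reading is the one that makes the statement true for \emph{correction} (the literal $<d$ bound is the \emph{detection} threshold, or the erasure-correction threshold when error locations are known). Since the paper defers entirely to Ref.~\cite{nielsen2010quantum} for the precise meaning, your interpretation is the right way to make the informal phrasing rigorous, but it would be worth saying so explicitly rather than folding it into ``the standard way.''
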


The previous lemma implies that the essential properties of the code can be understood via analyzing the abelianizations.
The subclass of CSS codes is of particular interest, since they have natural representations as complexes.

\begin{definition}
\label{def:CSS}
Let $\sP_X,\sP_Z$ be the Pauli group generated by (pure) $X$- and $Z$-type Pauli operators on $n$ qubits, with abelianizations $P_X, P_Z$ so that $P_X\cong P_Z \cong \dF_2^{n}$.
A \textbf{Calderbank-Shor-Steane (CSS) code} is a stabilizer code $\sS$ which can be generated by $X$-type and $Z$-type Pauli operators, i.e., $\sS$ is generated by subgroups $\sS_X \equiv \sS\cap \sP_X$ and $\sS_Z \equiv \sS\cap \sP_Z$, or equivalently, the abelianization $S$ is the direct sum of its projection $S_X,S_Z$ on $P_X,P_Z$, respectively, i.e., 
\begin{equation}
    S = S_X\oplus S_Z
\end{equation}
Moreover, regard $S_X,S_Z$ be subspaces of $\dF_2^n$ and $S_X^\perp,S_Z^\perp$ as the orthogonal complements with respect to the standard inner product on $\dF_2^n$.
Define the collection of equivalence classes of \textbf{$X$-type logical operators} as $S_Z^\perp/S_X$ and that of \textbf{$Z$-type logical operators} as $S_X^\perp/S_Z$.
\end{definition}

\begin{lemma}
\label{lem:CSS-distance}
Let $\sS$ be a CSS code with abelianization $S=S_X\oplus S_Z$. Then
\begin{equation}
\frac{S^\perp}{S} \cong \frac{S_Z^\perp}{S_X} \oplus \frac{S_X^\perp}{S_Z}
\end{equation}
Moreover, if we define the $X$- and $Z$-type code distances as
\begin{equation}
    d_X = \min_{\ell \in S_Z^\perp\backslash S_X} |\ell|, \quad d_Z = \min_{\ell \in S_X^\perp\backslash S_Z} |\ell|
\end{equation}
Then the code distance $d$ of $\sS$ is $=\min(d_X,d_Z)$.
\end{lemma}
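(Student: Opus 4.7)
The plan is to prove the two assertions in sequence: first the decomposition of $S^\perp/S$ as a direct sum, then the characterization of the code distance as the minimum of $d_X$ and $d_Z$. The underlying idea in both cases is that the direct sum structure $P = P_X \oplus P_Z$ together with the block-antidiagonal form of the symplectic form $\lambda$ makes the $X$- and $Z$-type sectors decouple cleanly.

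For the first part, I would begin by writing any Pauli-space element $\ell \in P$ uniquely as $\ell = \ell_X + \ell_Z$ with $\ell_X \in P_X$ and $\ell_Z \in P_Z$, and similarly $s = s_X + s_Z$ for $s \in S$. The symplectic pairing then reduces to $\ell \lambda s = \ell_X \cdot s_Z + \ell_Z \cdot s_X$ where $\cdot$ denotes the standard $\dF_2$ inner product on $\dF_2^n$. Since $S = S_X \oplus S_Z$, we may set $s_Z = 0$ and $s_X = 0$ independently, so $\ell \in S^\perp$ iff $\ell_X \in S_Z^\perp$ and $\ell_Z \in S_X^\perp$. This shows $S^\perp = S_Z^\perp \oplus S_X^\perp$ (with the first factor lying in $P_X$, the second in $P_Z$). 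The decomposition
\begin{equation}
S^\perp/S = (S_Z^\perp \oplus S_X^\perp)/(S_X \oplus S_Z) \cong S_Z^\perp/S_X \oplus S_X^\perp/S_Z
\end{equation}
is then immediate from elementary linear algebra, using that $S_X \subseteq S_Z^\perp$ and $S_Z \subseteq S_X^\perp$ by the commutation hypothesis on the CSS code.

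For the second part, I would use the fact that the weight of $\ell = \ell_X + \ell_Z$ as a Pauli operator satisfies $|\ell| = |\supp(\ell_X) \cup \supp(\ell_Z)| \geq \max(|\ell_X|,|\ell_Z|)$. Given a nontrivial logical operator $\ell \in S^\perp \setminus S$, at least one of $\ell_X \notin S_X$ or $\ell_Z \notin S_Z$ holds. In the first case $|\ell| \geq |\ell_X| \geq d_X$, in the second $|\ell| \geq |\ell_Z| \geq d_Z$, so in every case $|\ell| \geq \min(d_X,d_Z)$, giving $d \geq \min(d_X,d_Z)$. The reverse inequality is obtained by taking an optimal $\ell_X \in S_Z^\perp \setminus S_X$ of weight $d_X$ and observing that $(\ell_X,0) \in S^\perp \setminus S$ witnesses $d \leq d_X$, and symmetrically $d \leq d_Z$.

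I do not anticipate a real obstacle here — the statement is essentially a bookkeeping consequence of the direct-sum structure, with the one subtle point being the inequality $|\ell| \geq \max(|\ell_X|,|\ell_Z|)$ for the combined weight, which must be justified from the definition of Hamming weight on a Pauli operator (not the sum of the $X$ and $Z$ parts, but the number of qubits on which the combined operator acts nontrivially). Once this is pointed out, the rest reduces to choosing $\ell_Z = 0$ or $\ell_X = 0$ to realize the optimal $X$- or $Z$-only representatives.
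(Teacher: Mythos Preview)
The paper states this lemma without proof, treating it as a standard fact about CSS codes; your argument is the standard one and is correct. One minor remark: the paper's convention for $|\ell|$ is the Hamming weight of the abelianization in $\dF_2^{2n}$, which for $\ell=\ell_X+\ell_Z$ equals $|\ell_X|+|\ell_Z|$ rather than $|\supp(\ell_X)\cup\supp(\ell_Z)|$, but since either quantity is $\ge \max(|\ell_X|,|\ell_Z|)$ your inequality and the rest of the argument go through unchanged.
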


\begin{definition}[Basis]
    \label{def:parity-check}
    Consider a CSS code with abelianization $S=S_X\oplus S_Z$.
    Regard each $S_X,S_Z$ as subspaces of $\dF_2^n$ and choose an (ordered) basis of $S_X,S_Z$, respectively.
    The \textbf{support} of a basis elements are the indices with nonzero entry.
    Let $H_X,H_Z$ denote $n \times n_X$, $n\times n_Z$ matrices over $\dF_2$ such that each column is a basis element of $S_X,S_Z$, respectively.
    Then we say that the CSS code is generated by \textbf{parity check matrices} $H_X,H_Z$.
    Further denote
    \begin{itemize}
        \item The maximum weight of the columns in $H_X,H_Z$ is denoted $w_X \equiv |H_X|_{\mathrm{col}},w_Z \equiv |H_Z|_{\mathrm{col}}$, respectively, which depicts the max number of qubits any $X$-, $Z$-type generator acts on.  
        \item The maximum weight of the rows in $H_X,H_Z$ is denoted by $q_X \equiv |H_X|_{\mathrm{row}},q_Z \equiv |H_Z|_{\mathrm{row}}$ respectively, which depicts the max number of $X$-, $Z$-type generators that acts on any qubit.
    \end{itemize}
    We say that the code with given $H_X,H_Z$ is a \textbf{low-density parity-check code (LDPC)} if $w_X,w_Z,q_X,w_Z =O(1)$ in the large $n$ limit.
\end{definition}
\subsection{Chain complexes}
\begin{definition}
\label{def:chain-complex}
A \textbf{(chain) complex} $C$ is a family of $\dF_2$-vector spaces $C_{i}$, referred as the \textbf{degree} $i$ of $C$, together with $\dF_2$-linear maps $\partial_{i}:C_{i} \to C_{i-1}$, called the \textbf{differentials} of $C$, such that $\partial_{i}\partial_{i+1}:C_{i+1} \to C_{i-1}$ is zero. We write
\begin{equation}
    C = \cdots \to  C_{i} \xrightarrow{\partial_{i}} C_{i-1} \to \cdots
\end{equation}
Note that $0\subseteq \im \partial_{i+1} \subseteq \ker \partial_i \subseteq C_i$ for all $i$, and thus the \textbf{$i$-homology} of $C$ is defined as
\begin{equation}
    H_i(C)\equiv H_i (\partial) = \ker \partial_i/\im \partial_{i+1}
\end{equation}
Denote the equivalences class of $\ell_{i} \in C_{i}$ as $[\ell_{i}]\in H_i(\partial)$ so that the notation $[\cdots]$ is \textit{reserved} for the quotient map -- conversely, we may also write $[\ell_i] \in H_i(\partial)$ without specifying the representation $\ell_i$.
The complex $C$ is \textbf{exact} if $H_i(\partial) =0$ for all $i$ and two complexes are \textbf{quasi-isomorphic} if their homologies are isomorphic.
Adopt the following \textit{conventions}
\begin{itemize}
    \item The subscript in $\partial_{i}$ is often omitted, so that the condition is simplified as $\partial \partial =0$.
    \item Assume $C$ is of finite length, i.e., $C_i=0$ for all but finitely many $i$, and thus assume that $C$ ends at degree $i=0$, i.e., write $C_n \to \cdots \to C_0 $ (though $C_0$ is possibly 0) so that the \textbf{length} of $C$ is $n$.
\end{itemize}
\end{definition}

\begin{definition}[Basis]
\label{def:basis}
A complex $C$ is equipped with an \textbf{(ordered) basis} $\sB(C)$ if each degree $C_i$ is equipped with an ordered basis $\sB(C_i)$ --  we called the basis elements \textbf{$i$-cells}. 
The basis induces a fixed isomorphism $C_i \cong \dF_2^{n_i}$ and thus a well-defined \textbf{(Hamming) weight} $|\ell_{i}|$ for any $\ell_{i} \in C_{i}$.
Note that any $\ell_i\in C_i$ must be the $\dF_2$-span of basis elements and thus can be regarded as a subset $\ell_i\subseteq \sB(C_i)$ --  we call this the \textbf{natural identification}\footnote{
In particular, summation in $C_i$ corresponds to taking the symmetric difference and the weight of $\ell_i$ is the cardinality of $\ell_i$ as a subset}.
Adopt the \textit{convention} that $c_i\in \sB(C_i)$ and that $\sB(C_i)$ is the standard basis if $C_i = \dF_2^{n_i}$.
We say that two complexes with basis are \textbf{equal} if there exists a bijective mapping between the  $i$-cells for each $i$.
\end{definition}

\begin{remark}
    We emphasize the main results Theorem \eqref{thm:height-2} and \eqref{thm:height-n} do not require a basis.
\end{remark}

\begin{definition}[Cochain]
Let $C$ denote a complex with basis.
The isomorphism $C_i\cong \dF_2^{n_i}$ induces an inner product on $C_i$, and thus the transpose $\partial_i^T:C_{i-1} \to C_i$ is well-defined so that that \textbf{cochain complex}\footnote{The cochain complex can be defined in a more general setting \cite{hatcher2005algebraic,weibel1994introduction}, but for the purpose of CSS codes, the previous definition is sufficient. 
A similar simplification was also assumed in Ref. \cite{breuckmann2021balanced}.} is that given by
\begin{equation}
    C^T = \cdots \leftarrow C_{i} \xleftarrow{\partial_{i}^T} C_{i-1} \leftarrow \cdots
\end{equation}
The \textbf{$i$-cohomology} of $C$ is $H^i(\partial) = \ker \partial_{i+1}^T/\im \partial_i^T$ which is isomorphic to $H_i(\partial)$.
Note that for a length $n$ complex, $H^i(\partial) = H_{n-i}(\partial^T)$.


\end{definition}

\begin{remark}
    Note that the cochain complex is obtained by inverting all arrows. This also applies to Theorem \eqref{thm:height-2} (or Diagram \eqref{eq:height-2-diagram}).
\end{remark}

\begin{definition}[Adjacency]
Let $C$ be a complex with $i$-cells $c_i$.
We say that $c_{i},c_{i-1} $ are \textbf{adjacent}  if $\bra c_{i-1} |\partial c_{i}\ket \ne 0$, and write $c_{i} \sim^{C} c_{i-1}$ or $c_i \sim c_{i-1}$. 
Extend the definition and say that $c_{i},c_{j}$ with $i\ne j$ are \textbf{adjacent} if there exists $c_{m}$ for $m=i-1,...,j$ such that $c_{m+1}\sim c_{m}$ for all $m$, and write $c_{i} \sim c_j$.
The \textbf{($j$)-support} of $c_i$, denoted by $\supp_{j} c_i \equiv \supp_{j}^{C} c_{i}$ where $j\ne i$, is the collection of $c_j$ adjacent to $c_i$.
\end{definition}

\begin{example}[CSS Codes]
\label{ex:CSS}
Consider a CSS code with parity check matrices $H_X,H_Z$.
Define the following sequence
\begin{equation}
    C = \dF_2^{n_Z} \xrightarrow{H_Z} \dF_2^n \xrightarrow{H_X^T} \dF_2^{n_X} 
\end{equation}
Then $C$ is a complex with $\im H_Z = S_Z$  and $\ker H_X^T = S_X^\perp$ and $H_1(C),H^1(C)$ are the collection of equivalence classes of $Z$-, $X$-type logical operators, respectively.
Conversely, given a complex, i.e., $C=C_2 \to C_1 \to C_0$, equipped with basis, the parity matrices $H_Z,H_X^T$ can be defined via matrix representations of $\partial_2,\partial_1$, respectively, so that the code is given via Remark \eqref{rem:abelian-correspondence}.
In particular, the \textbf{$Z$-} and \textbf{$X$-type code distance} of $C$ can be defined as in Lemma \eqref{lem:CSS-distance}, and the \textbf{support} of $c_2$ or $c_0$ will always refer to the $1$-support in the context of CSS codes.
For convenience, denote the weight $w_Z(C),w_X(C),q_Z(C),q_X(C)$ via the following diagram
\begin{equation}
C_2 \xrightleftharpoons[q_Z]{w_Z} C_1 \xrightleftharpoons[w_X]{q_X} C_0
\end{equation}
Furthermore, given $c_2,c_0$, we shall refer to the \textbf{common qubits} -- denoted as $c_2\wedge c_0$ -- as intersection of supports of $c_2,c_0$, i.e., the collection of $c_1$ such that $c_2 \sim c_1 \sim c_0$.

\end{example}

\begin{example}(Classical Codes)
\label{ex:classical}
A classical code can be regarded as a CSS code with only one type, say $Z$-type, generators, and thus a classical code can be written as a complex of length $1$, i.e., $\dF_2^{n_Z} \to \dF_2^n$.
One of the most important classical codes is the \textbf{repetition code} on $L$ qubits, which is defined by the generators $Z_i Z_{i+1}$ where $i=1,...,L-1$.
We denote the corresponding complex as $R \equiv R(L)$ with differential $\partial^{R}$.
The 1-cells are denoted via half-integers $|i^+\ket$ for $i=1,...,L-1$ where $i^\pm =i\pm 1/2$, and 0-cells via integers $|i\ket$ for $i=1,...,L$ so that
\begin{equation}
    \partial^{R} |i^+\ket = |i\ket +|i+1\ket
\end{equation}
We implicitly assume that $|i\ket, |i^\pm\ket =0$ if the label is not within the previous parameters.
\end{example} 


\begin{definition}[Variations]
For our purposes, we will also need the following variations.
\begin{itemize}
    \item The \textbf{cyclic repetition code} $R^{\circlearrowleft} = R^{\circlearrowleft}(L)$ on $L$ qubits is the complex $\dF_2^{L} \to \dF_2^{L}$ with
    \begin{equation}
        \partial^{R^{\circlearrowleft}}|i^+\ket =|i\ket +|i+1\ket
    \end{equation}
    where the 1- and 0-cells are labeled via half-integers $|i^+\ket$ and integers $|i\ket$ where $i=1,...,L$, respectively, and addition of labels is computed modulo $L$.
    Note that $(R^{\circlearrowleft})^T =R^{\circlearrowleft}$.
    \item The \textbf{dangling repetition code} $R^{\multimap}=R^{\multimap}(L)$ on $L$ qubits is the complex $\dF_2^{L} \to \dF_2^{L}$ with 
    \begin{equation}
        \partial^{R^{\multimap}}|i^+\ket =|i\ket +|i+1\ket
    \end{equation}
    where the 1- and 0-cells are labeled via half-integers $|i^+\ket$ and integers $|i\ket$ where $i=1,...,L$, respectively, and implicitly assume that $|i\ket, |i^\pm\ket =0$ if the label is not within the previous parameters.
\end{itemize}
\end{definition}
\begin{lemma}[Repetition Code]
\label{lem:rep}
Let $R,R^{\circlearrowleft},R^{\multimap}$ denote the different variations of the repetition code. Then
\begin{equation}
\begin{aligned}[c]
    H_1(R) &=0  \\
    H_1(R^{\circlearrowleft}) &\cong \dF_2 \\
    H_1(R^{\multimap}) &\cong 0
\end{aligned}
\qquad
\begin{aligned}[c]
    H_0(R) &\cong \dF_2    \\
    H_0(R^{\circlearrowleft}) &\cong \dF_2 \\
    H_0(R^{\multimap}) &=0
\end{aligned}
\end{equation}
The unique basis element of $H_0(R)$ is given by $[|i\ket]$ for any $i=1,...,L$ and the unique basis element of $H^0(R)$ is $[\sR_0]$ where
\begin{equation}
    \sR_0 \equiv \sum_{i=1}^{L} |i\ket
\end{equation}
Note via the natural identification, $\sR_0$ can be regarded as all basis elements in $R_0$.
\end{lemma}

\subsection{Chain Complex Operations}
\begin{definition}[Height-($\le 2$) Cone]
\label{def:cone}
In this manuscript, we will be constructing sequences $C$ by \textbf{gluing} complexes $C^{s}$ together, in the sense of Theorem \eqref{thm:height-2} or Diagram \eqref{eq:height-2-diagram}.
Therefore, for simplicity, we write
\begin{equation}
\begin{tikzpicture}[>=implies,baseline]
\matrix(a)[matrix of math nodes, nodes in empty cells, nodes={minimum size=5mm},
row sep=2em, column sep=2em,
text height=1.5ex, text depth=0.25ex]
{C^{2} & C^{1} & C^{0}\\};
\draw[double,->,font=\scriptsize](a-1-1) -- node[above]{$g_{2}$} (a-1-2);
\draw[double,->,font=\scriptsize](a-1-2) -- node[above]{$g_{1}$} (a-1-3);
\draw[double,->,dashed,font=\scriptsize](a-1-1) to [out=-30,in=-150] node[below]{$p$} (a-1-3);
\end{tikzpicture}
\end{equation}
to denote $C$ obtained by setting $C_{i} = \oplus_s C^{s}_{i}$ and map $\partial:C_{i} \to C_{i-1}$ as the lower triangular matrix in Theorem \eqref{thm:height-2}. 
For simplicity, an arrow/term is omitted if the corresponding map/term is zero. 
We may pad levels with zeros, e.g., $0\to C^{s}$ or $C^{s} \to 0$, so that the all levels in the diagram are of the same length.
We write a similar definition for the construction of the height-1 cone $C=C^1 \Rightarrow C^0$.
We emphasize that it needs to be proven that the sequence $C$ is a complex, i.e., $\partial \partial =0$.
If $C^{s}$ are equipped with a basis $\sB(C^{s})$, then $C$ is equipped with the disjoint union $\sqcup_{s}\sB(C^{s})$ and we will define the gluing maps $g_1,g_2, p$ as the $\dF_2$ linear extension of specified actions on the basis.
If an action is not specified on some basis element $c^{s}_{i}$, then it maps the element  to $0$ \textit{by default}.

\begin{remark}
    Despite not necessary to understand this manuscript, it may be worth mentioning that the data for the height-2 cone is the first part of the data $\bra a, b, c\ket$ used for Toda brackets \cite{toda1962composition} where $a=g_2,b=g_1,c=0$.
    However, this connection fails in the generalization to height $n\ge 3$-cones.
\end{remark}

\begin{definition}[Tensor Product]
    \label{def:tensor-product}
    Let $C,D$ denote complexes with differentials $\partial^{C},\partial^{D}$. Then the \textbf{tensor product} is the complex $C\otimes D$ with degrees and differential
    \begin{equation}
        (C\otimes D)_{m} = \bigoplus_{i+j=m} C_{i} \otimes D_{j}, \quad \partial =\partial^{C} \otimes I \oplus I \otimes \partial^{D}
    \end{equation} 
    If $C,D$ have basis $\sB(C),\sB(D)$, then adopt the \textit{convention} that $C\otimes D$ has the basis at degree $m$ consisting of $c_i\otimes d_j$ over all $i+j=m$.
    The following lemma implies an analogous \textit{convention} can be adopted for the basis of $H_m(C\otimes D)$, provided that the homologies $H_i(C),H_i(D)$ are equipped with a basis for all $i$.
\end{definition}
\begin{lemma}[K\"{u}nneth Formula \cite{breuckmann2021balanced,weibel1994introduction}]
    \label{lem:Kunneth} 
    Let $C,D$ be complexes. Then 
    \begin{equation}
        H_m(C\otimes D) \cong  \bigoplus_{i+j=m} H_{i}(C) \otimes H_{j}(D)
    \end{equation}
\end{lemma}

\begin{example}[2D Toric Code \cite{tillich2013quantum,breuckmann2021balanced}]
    \label{ex:toric}
    Note that if $C,D$ are of length $n_C,n_D$, then its tensor product is of length $n_C+n_D$. In particular, the tensor product of two classical codes is a quantum code.
    For example, the 2D toric code on a torus, written as a complex $C$, is the tensor product of cyclic repetition codes, i.e., $C=R^{\circlearrowleft}\otimes R^{\circlearrowleft}$.
    The 2D toric code with alternating smooth and rough boundary conditions is the tensor product of classic repetition codes $R\otimes R^T$.
    The 2D toric code with smooth, rough boundary conditions is exactly the tensor product $R\otimes R, R^T\otimes R^T$, respectively.
    See Fig. \ref{fig:toric-boundaries}.
\end{example}
\begin{definition}[Direct Sum]
    \label{def:direct-sum}
    Let $C^{s}$ denote complexes with differentials $\partial^{s}$. Then the \textbf{direct sum} $C \equiv \oplus_{s} C^{s}$ is the complex consisting of degrees and differential
    \begin{equation}
        C_i = \bigoplus_{s} C^{s}_{i}, \quad \partial = \bigoplus_{s} \partial^{s}
    \end{equation}
    If $C^{s}$ is equipped with a basis $\sB(C^{s})$, then adopt the \textit{convention} that its direct sum has basis consisting of the disjoint union $\sqcup_{s} \sB(C^{s}_{i})$ at degree $i$.
    The following lemma also implies that an analogous \textit{convention} for the basis of $H_i(C)$, provided that the homologies of $C^{s}$ are equipped with a basis.
\end{definition}
\begin{lemma}
    \label{lem:direct-sum}
    Let $C^{s}$ be complexes with direct sum $C$. 
    Then 
    \begin{equation}
        H_i(C) = \bigoplus_{s} H_i(C^{s})
    \end{equation}
\end{lemma}

\begin{remark}
    \label{rem:direct-vs-prod}
    If $C =\oplus_{s} A$ denotes copies of a complex $A$ with basis $\sB(A)$, then we denote the basis elements in $\sB(C_i)$ via $|a_i;s\ket$ where $a_i\in \sB(A_i)$.
    If $A$ is the repetition code (or its variations), we write $|\sigma;s\ket$ instead of $||\sigma\ket;s\ket$ where $\sigma$ is an integer or half-integer.
    In particular, if $S$ denotes the $\dF_2$ vector space generated by a basis $|s\ket$ labeled via $s$, then it's clear that
    \begin{equation}
        \bigoplus_{s} A \cong A\otimes S 
    \end{equation}
    where we treat $S$ as a complex of length 0, and so we may also denote the basis elements as $a_i\otimes |s\ket$.
\end{remark}

\end{definition}
\section{Topological Codes}
\label{sec:topology}

This section will serve as a warm-up for upcoming sections, in which we show how the mapping cone can be applied to topological codes.
The exact definition of topological codes is somewhat vague, but it should at least include the following two classes.

The first class is the 2D toric code -- the paradigm for topological order -- with boundary conditions.
The existence of boundary conditions implies that it cannot be induced as the CW complex of 2D manifold and thus whether techniques developed from algebraic topology can be utilized is not apparent.
In particular, the rough and smooth boundary conditions of a 2D toric code \cite{bravyi1998quantum,kitaev2012models} are generally defined on finite square lattice.
However, similar constructions should also exist on other translationally invariant planar graphs, e.g., the honeycomb or triangular lattices.
This begs the question of how the boundary conditions on such lattices are structured and whether there is a natural isomorphism between distinct lattices with boundary conditions.

The second class are codes induced by discretizing a manifold. 
More specifically, a simplicial complex of a topological manifold induces a (finite) complex (over $\dF_2$) $C_{n} \to \cdots \to C_{0}$ with basis given by the simplices.
Choosing any subsequence of length 2, say $C_{i+1} \to C_{i} \to C_{i-1}$ for some $1<i<n$, thus defines a CSS code.
A manifold, however, can have multiple simplicial complexes, each inducing a distinct CSS code, and thus begs the question whether the resulting CSS codes are quasi-isomorphic in some natural manner.
In algebraic topology, this is answered by showing that the homology of any simplicial complex is isomorphic to the (uniquely defined) singular homology of the underlying manifold \cite{hatcher2005algebraic}.
However, one may wonder if it is possible to circumvent this construction since it relies on relatively in-depth knowledge regarding the point-set topology of the underlying manifold, and show directly that two simplicial complexes are isomorphic.

We comment that the following subsections can be read independently.

\subsection{2D Toric Code with Boundary Conditions}

\begin{figure}[ht]
\centering
\subfloat[\label{fig:toric-square}]{%
    \centering
    \includegraphics[width=0.47\columnwidth]{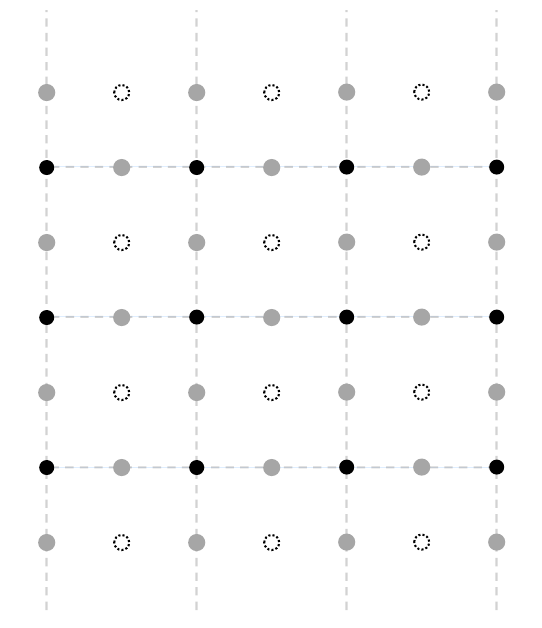}
}
\hspace{0.01\columnwidth}
\subfloat[\label{fig:toric-honey}]{%
    \centering
    \includegraphics[width=0.47\columnwidth]{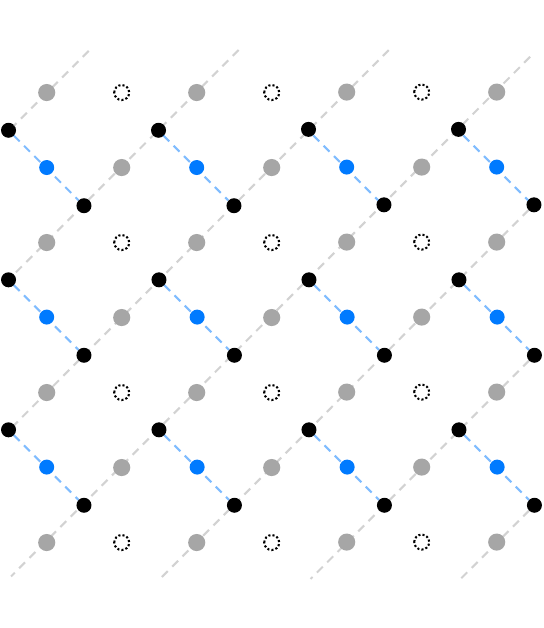}
}
\\
\subfloat[\label{fig:toric-triangle}]{%
    \centering
    \includegraphics[width=0.47\columnwidth]{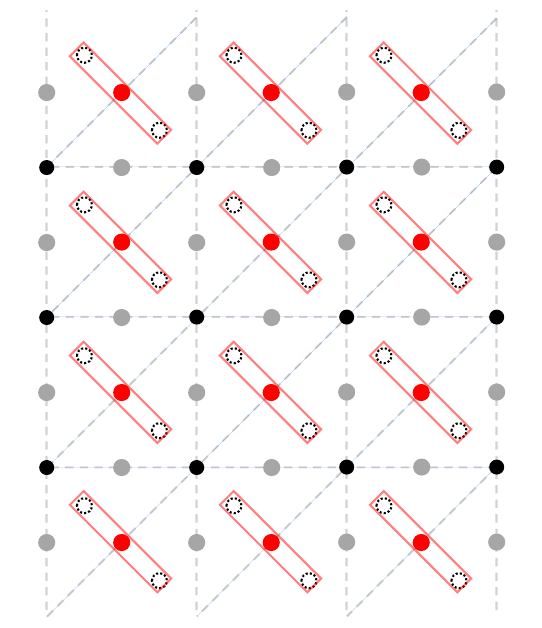}
}
\caption{Toric Code on Planar Graphs.
Edges of the lattice are denote by dashed lines.
Each edge hosts a qubit, while each vertex/plaquette hosts an $X$-/$Z$- type Pauli operator acting on adjacent qubits, denoted by a black/dashed dot, respectively.
(a) depicts the toric code (with alternating smooth and rough boundaries) on a finite square lattice where qubits are denoted by grey dots.
(b) depicts the toric code on the honeycomb lattice with boundary conditions corresponding to (a), where qubits are denoted by grey and blue dots.
(c) depicts the toric code on the triangular lattice with boundary conditions corresponding to (a), where qubits are denoted by grey and red dots.
Note that certain red dots and neighboring dashed dots are \textit{boxed} together to emphasize the construction in Eq. \eqref{eq:tri-lvl-2} in Theorem \eqref{thm:toric-triangular}.
}
\label{fig:toric-planar}
\end{figure}

By Example \eqref{ex:toric}, the toric code on a (finite) square lattice is described by the complex $\Tor^{\sq} = X \otimes Y$, where $X =R^{\circlearrowleft}(L_X), Y=R^{\circlearrowleft}(L_Y)$ if periodic boundary conditions are considered, and $X=R(L_X),Y=R(L_Y)^T$ if alternating smooth and rough boundary conditions are considered.
We adopt the notation that the 1-cells of $X$ are given by $|x_1\ket$ where $x_1$ is a half-integer and 0-cells are given by $|x_0\ket$ with $x_0$ is an integer -- the notation is similar for $Y$.

As discussed previously, the toric code $\Tor^{\sq}$ on the square lattice and that on a distinct lattice, say the honeycomb lattice $\Tor^{\hon}$, should be equivalent in some manner.
The equivalence mapping should be local in real space and independent of the boundary condition when applied to the \textit{bulk} of the lattice.
One way is to consider Fig. \ref{fig:toric-honey} and note that if the dashed blue line is \textit{compressed} to a single point, then the honeycomb lattice is (at least visually) reduced to a square lattice.
Using the main result in Theorem \eqref{thm:height-2}, this intuition can be formalized precisely as follows.
Let level 2, 1 be the complexes induced by $\Tor_2^\sq, \Tor_1^\sq$, respectively, and level 0 be obtained by \textit{grouping} elements on the blue dashed line together, i.e.,
\begin{align}
    \label{eq:hon-lvl-2}
    C^{2} &\equiv \Tor_2^\sq \to 0\to 0 \\
    C^{1} &\equiv 0 \to \Tor_1^\sq \to 0 \\
    C^{0} &\equiv 0 \to \Tor_0^\sq \otimes R(2)
\end{align}
Define the $g_1,p$ as follows
\begin{align}
        g_1 |x_1 y_0\ket &= |x_1^+ y_0 2\ket + |x_1^- y_0 1\ket \\
        g_1 |x_0 y_1\ket &= |x_0 y_1^+ 1\ket + |x_0 y_1^- 2\ket \\
        p |x_1 y_1\ket   &= (|x_1^+ y_1^+\ket + |x_1^- y_1^-\ket)\otimes |1^+\ket
        \label{eq:hon-homotopy}
\end{align}
Then we have the following statement
\begin{theorem}[Honeycomb]
    \label{thm:toric-honeycomb}
    Let $\Tor^{\sq} = X\otimes Y$ with $\partial^{\sq}$ be the toric code on the square lattice (with periodic or alternating boundary conditions).
    Let $C$ be the sequence obtained via the gluing procedure in Definition \eqref{def:cone}
    where $g_2=\partial_2^{\sq}$ and others terms are defined in Eq. \eqref{eq:hon-lvl-2}-\eqref{eq:hon-homotopy}.
    Then $C$ is a regular cone describing the toric code on the honeycomb lattice, i.e., $C=\Tor^{\hon}$, with embedded code $\Tor^{\sq}$.
\end{theorem}

\begin{proof}
    See Appendix \eqref{proof:toric-honeycomb}
\end{proof}

A similar mapping can be constructed for the triangular lattice, as sketched in Fig. \ref{fig:toric-triangle}, where elements in the red boxes are \textit{grouped} together.
The precise statement is then given as follows with definitions
\begin{align}
    \label{eq:tri-lvl-2}
    C^{2} &\equiv \Tor^{\sq}_{2} \otimes R(2)^T \to 0 \\
    C^{1} &\equiv 0 \to \Tor^{\sq}_{1} \to 0 \\
    C^{0} &\equiv 0 \to 0 \to \Tor^{\sq}_{0}
\end{align}
Define $g_2,p$ as follows
\begin{align}
        g_2 |x_1 y_1 1\ket &= |x_1^+ y_1\ket + |x_1 y_1^-\ket \\
        g_2 |x_1 y_1 2\ket &= |x_1^- y_1\ket + |x_1 y_1^+\ket \\
        p |x_1 y_1\ket\otimes |1^+\ket   &= |x_1^+ y_1^+\ket + |x_1^- y_1^-\ket
        \label{eq:tri-homotopy}
\end{align}
\begin{theorem}[Triangular]
    \label{thm:toric-triangular}
    Let $\Tor^{\sq} = X\otimes Y$ with $\partial^{\sq}$ be the toric code on the square lattice (with periodic or alternating boundary conditions).
    Let $C$ be the sequence obtained via the gluing procedure in Definition \eqref{def:cone}
    where $g_1 = \partial_1^{\sq}$ and other terms are defined in Eq. \eqref{eq:tri-lvl-2}-\eqref{eq:tri-homotopy}.
    Then $C$ is a regular cone describing the toric code on the triangular lattice, i.e., $C=\Tor^{\hon}$, with embedded code $\Tor^{\sq}$.
\end{theorem}
\begin{proof}
    See Appendix \eqref{proof:toric-triangular}
\end{proof}

\subsection{Barycentric Subdivision}

\begin{figure}[ht]
\centering
\subfloat[\label{fig:barycenter-in}]{%
    \centering
    \includegraphics[width=0.47\columnwidth]{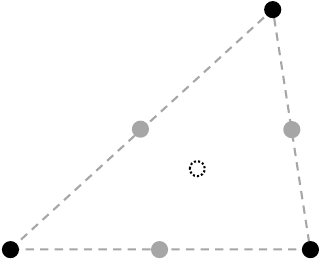}
}
\hspace{0.01\columnwidth}
\subfloat[\label{fig:barycenter-sub}]{%
    \centering
    \includegraphics[width=0.47\columnwidth]{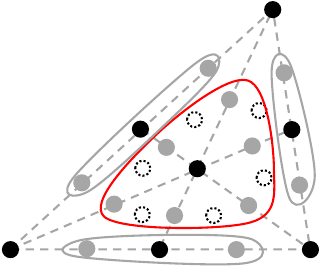}
}
\\
\subfloat[\label{fig:barycenter-general}]{%
    \centering
    \includegraphics[width=0.47\columnwidth]{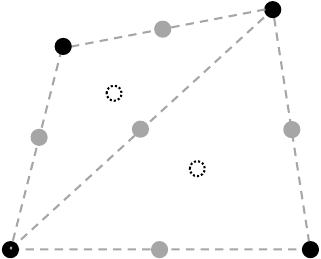}
}
\caption{Barycentric Subdivision of 2-simplex.
(a) depict the induced complex of a 2-simplex, whether the dashed, grey, black dots denote the 2-cell, 1-cells, 0-cells, respectively.
The dots also represent the barycenters of the plaquettes, edges and vertices, respectively.
(b) depicts the 2-cells, 1-cells, 0-cells of the barycentric subdivision of (a) via the dashed, grey, black dots, respectively.
The red and grey circle denote the levels $C^{2},C^{1}$, respectively.
(c) depicts a simplicial complex induced by two simplices.
}
\label{fig:barycenter}
\end{figure}

As discussed previously, any CSS code can be induced by a simplicial complex on a topological manifold.
Our goal is to establish an equivalence map between codes arising from different simplicial complexes on the same manifold, while avoiding the machinery of singular homology.
Rather than proving this in full generality, we focus on the case of a simplicial complex and its barycentric subdivision. 
This restriction does not result in a significant loss of generality, as the essential idea behind the original proof \cite{hatcher2005algebraic} involves iterated barycentric subdivisions, which renders the discrete structure increasingly close to a continuous one.


An $n$-simplex $\Delta$ with distinct vertices $|n\ket,....,|0\ket$ induces a complex $\Delta = \Delta_n \to \cdots \to \Delta_0$ with basis $\sB(\Delta_m)$ consisting of $|\bm{\ell}\ket \equiv |\ell_m,...,\ell_0\ket \equiv |\ell_m\ket \otimes \cdots \otimes |\ell_0\ket$ where $\ell_{m} > \cdots >\ell_{0}$, and differential
\begin{equation}
    \partial^{\Delta} |\bm{\ell}\ket = \sum_{i=0}^{m} |\hat{\bm{\ell}}_{{i}}\ket
\end{equation}
where $|\hat{\bm{\ell}}_i\ket\equiv |\ell_m,...,\ell_{i+1},\ell_{i-1},...,\ell_0\ket$ is the (decreasing) sequence with component $|\ell_{i}\ket$ removed.
As sketched in Fig. \ref{fig:barycenter-in}, the complex chain induced by a 2-simplex, for example, is such that the 0-cells are the vertices labeled as $|2\ket,|1\ket,|0\ket$, the 1-cells are the edges labeled as $|2,1\ket,|2,0\ket, |1,0\ket$ and the 2-cell is the plaquette labeled as $|2,1,0\ket$.
A general finite simplicial complex (see Fig. \ref{fig:barycenter-general}) is the \textbf{union}\footnote{In comparison, the direct sum can be regarded as the disjoint union.} complex induced by a collection of simplices, i.e., given collection of complexes $\Delta^s$ induced by simplices, their union is that $\Delta$ with degree $\Delta_m = \sum_{s} \Delta^{s}_{m}$ and basis $\sB(\Delta_m)=\cup_{s} \sB(\Delta^{s}_{m})$ and differential $\partial = \sum_{s} \partial^{s}$, and we refer to the $i$-cells of a simplicial complex as the \textbf{$i$-simplices}.
In particular, the 0-simplices is the (not necessarily disjoint) union of the vertices of the collection of simplices.

The \textbf{barycentric subdivision} of a simplicial complex $\Delta$ is the simplicial complex $C= C_{n} \to \cdots \to C_{0}$ with basis $\sB(C_m)$ consisting of $\| \bm{\ell}\ket = \|\ell_{m},...,\ell_0\ket$ where $\|\ell_{i}\ket$ is an $\ell_{i}$-simplex of $\Delta$, $\ell_{m}>\cdots >\ell_{0}$ and $\| \ell_m \ket \sim^{\Delta} \cdots \sim^{\Delta} \|\ell_0\ket$, and differential
\begin{equation}
    \partial \| \bm{\ell}\ket = \sum_{i=0}^{m} \|\hat{\bm{\ell}}_{i}\ket
\end{equation}
As sketched in Fig. \ref{fig:barycenter}, one may label each $i$-simplex in $\Delta$ by their barycenter so that the $i$-simplices of the subdivision are naturally labeled by the subsets of the barycenters.
Then $C$ with differential $\partial$ is naturally an height-$n$ cone, i.e., the differential is lower-triangular with respect to levels $C^{s}:C^{s}_{s} \to \cdots \to C^{s}_{0}$ defined as that with basis $\sB(C^{s}_{m})$ consisting of $\| \bm{\ell}\ket \in \sB(C_m)$ where $\ell_{m} =s$ and differential
\begin{equation}
    \partial^{s} \| \bm{\ell}\ket = \sum_{i=0}^{m-1} \|\hat{\bm{\ell}}_{i}\ket
\end{equation}
In particular, the levels $C^{s}$ are natural in the sense of Fig. \ref{fig:barycenter-sub}, or more specifically, we \textit{claim} that (where the proof is postponed to that of Theorem \eqref{thm:barycentric-subdivision}) $H_{r}(C^{s}) =0$ for all $r <s$ and $H_{s}(C^{s}) \cong C_{s}$ with basis
\begin{equation}
    [\| s\ket] \equiv \sum_{\|m\ket,m<s: \| s\ket \sim^{\Delta} \| s-1\ket \sim^{\Delta} \cdots \sim^{\Delta} \| 0\ket} \|s,s-1,...,0\ket,
\end{equation}
where we adopt the convention that $\|m\ket$ are $m$-simplicies of $\Delta$ for all $m$.
Similar to Diagram \eqref{eq:height-n-diagram}, the diagram for the barycentric subdivison $C$ can thus be written as 

\begin{equation}
\label{eq:barycenter-diagram}
\begin{tikzpicture}[baseline]
\matrix(a)[matrix of math nodes, nodes in empty cells, nodes={minimum size=5mm},
row sep=1.5em, column sep=1.5em,
text height=1.5ex, text depth=0.25ex]
{ C^{n}_{n}     &\cdot & \cdot  & C^{n}_{0}   \\
  C^{n-1}_{n-1} &\cdot & C^{n-1}_{0} &\\
  \cdot         &\cdot &  &  \\
  C^{0}_{0}     & & &\\
};
\path[->,font=\scriptsize]
(a-1-1) edge (a-1-2)
(a-1-2) edge (a-1-3)
(a-1-3) edge (a-1-4)
(a-2-1) edge (a-2-2)
(a-2-2) edge (a-2-3)
(a-3-1) edge (a-3-2)
(a-1-1) edge (a-2-1)
(a-1-2) edge (a-2-2)
(a-1-3) edge (a-2-3)
(a-2-1) edge (a-3-1)
(a-2-2) edge (a-3-2)
(a-3-1) edge (a-4-1);
\end{tikzpicture}
\end{equation}
where only diagonal $\partial^{s}$ and subdiagonal matrix elements of $\partial$ are shown as arrows.
Moreover, by the previous claim, $C$ is a regular cone.
It's then straightforward to check that $\Delta$ is the embedded complex within $C$ so that by Theorem $\eqref{thm:height-n}$, the complexes are quasi-isomorphic, i.e., $H_m(C) \cong H_m(\Delta)$ for all $m$.
This discussion is collected as follows.
\begin{theorem}
    \label{thm:barycentric-subdivision}
    Let $\Delta$ be a (finite) simplicial complex with barycentric subdivision $C$.
    Then $C$ is a regular height-$n$ cone with levels $C^{s},s=0,...,n$.
    In particular, $\Delta$ and $C$ are quasi-isomorphic.
\end{theorem}

\begin{proof}
    See Appendix \eqref{proof:barycentric-subdivision}
\end{proof}
\section{Embedding into Euclidean Space}
\label{sec:euclidean}

In this section, we show how previous constructions of embedding LDPC codes into Euclidean space  naturally fit into our framework.
We comment each subsection can be read independently of one another.

\subsection{Layer Code \cite{williamson2024layer}}
\label{sec:layer-code}

\begin{figure}[ht]
\centering
\subfloat[\label{fig:toric-alter}]{%
    \centering
    \includegraphics[width=0.45\columnwidth]{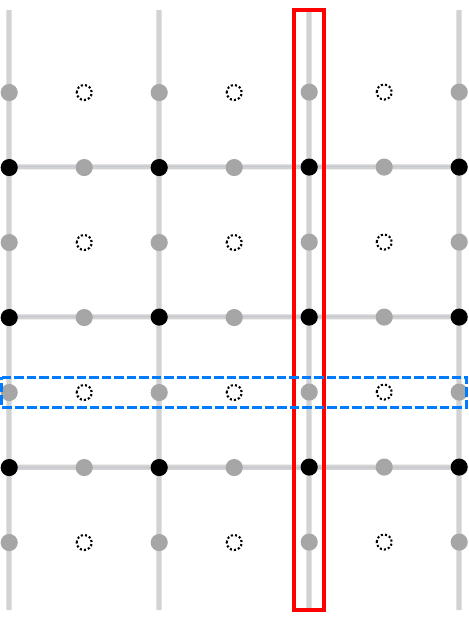}
}
\hspace{0.03\columnwidth}
\subfloat[\label{fig:toric-rough}]{%
    \centering
    \includegraphics[width=0.45\columnwidth]{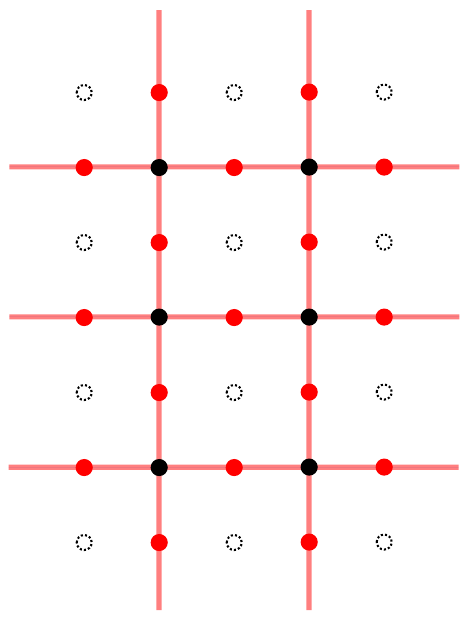}
}
\\
\subfloat[\label{fig:toric-smooth}]{%
    \centering
    \includegraphics[width=0.45\columnwidth]{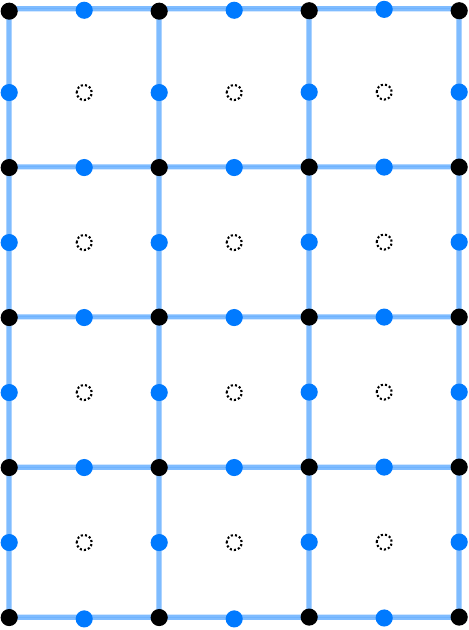}
}
\caption{Toric Codes with Boundaries. 
Each edge hosts a qubit, while each vertex (plaquette) hosts an $X$-, ($Z$-) type Pauli operator acting on adjacent qubits, denoted by a black (dashed) dot, respectively.
(a) depicts the toric code with alternating smooth and rough boundaries on a finite square lattice, and the qubits are denoted by grey dots.
$Z$ ($X$) Pauli operators acting on the qubits contained in the red (blue) rectangles denotes an example of nontrivial $Z$- ($X$-) type logical operator, respectively.
(b), (c) depict the toric code with rough, smooth boundaries on a finite square lattice, where qubits are denoted by red, blue dots, respectively.
(b), (c) do not host nontrivial logical operators.
}
\label{fig:toric-boundaries}
\end{figure}

\begin{figure}[ht]
\centering
\subfloat[\label{fig:layer-gluing-2}]{%
    \centering
    \includegraphics[width=0.95\columnwidth]{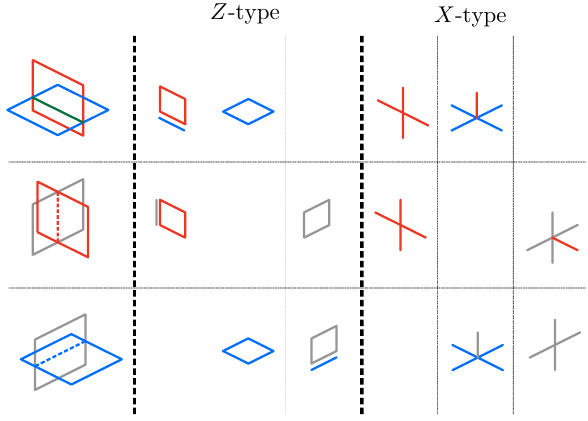}
}
\\
\subfloat[\label{fig:layer-gluing-3}]{%
    \centering
    \includegraphics[width=0.95\columnwidth]{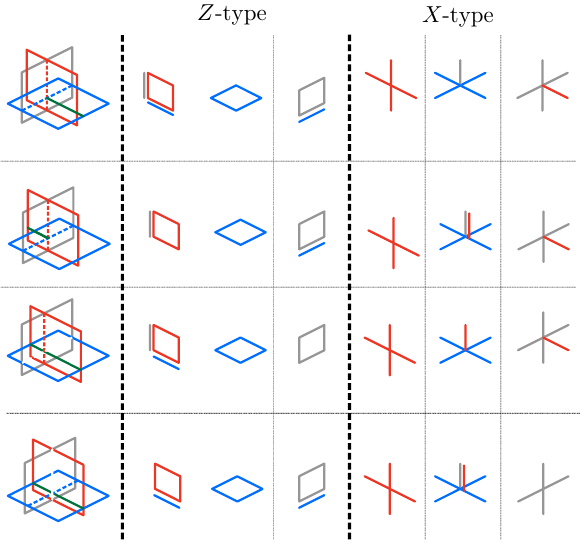}
}
\caption{Gluing of Toric Codes. 
(a) tabulates the modification of toric codes parity checks when two planes intersect, i.e., a $Z$- (red) or $X$- (blue) type plane interacting with a qubit plane (grey), or a $Z$- (red) and $X$- (blue) type plane interacting via a \textit{string defect} (green). 
For example, in row two, the $Z$-type plaquette operator of the $Z$-type plane (red) along the intersection (dashed red) is modified to also act on the corresponding edge (grey) of the qubit plane, while the $Z$-type plaquette operators of the qubit plane (grey) is not modified. (b) tabulates the modification of toric code parity checks when three planes intersect.
}
\label{fig:layer-gluing}
\end{figure}
\subsubsection{Quick Review}
Let us begin by reviewing the construction in Ref. \cite{williamson2024layer}, in which the authors provide a construction of embedding any CSS code into $\dR^3$.
Given a CSS code with parity check matrices $H_Z, H_X$, Ref. \cite{williamson2024layer} first (see Fig. \eqref{fig:layer-logical-example})
\begin{enumerate}[label=\arabic*)]
    \item Replaces each qubit with a toric code with alternating rough/smooth boundary conditions, i.e., Fig. \ref{fig:toric-alter}, in the $xz$ plane of $\dR^3$ (referred to as the \textbf{qubit plane}), so that the $Z$- ($X$-) type logical operators are along the $z$ ($x$) direction, respectively.
    In particular, there are $n$ parallel qubit planes ordered along the $y$ direction, so that qubit $i$ corresponds to plane $i$ in the positive $y$ direction. 
    \item Replaces each $Z$-type generator with a toric code with rough boundary conditions, i.e., Fig. \ref{fig:toric-rough}, in the $yz$ plane (referred to as the \textbf{$Z$-type plane}) so that there are $n_Z$ parallel copies.
    \item Replaces each $X$-type generator with a toric code with smooth boundary conditions, i.e., Fig. \ref{fig:toric-smooth}, in the $xy$ plane (referred to as the \textbf{$X$-type plane}), so that there are $n_X$ parallel copies.
\end{enumerate}

Up to this point, the toric codes should be regarded as ``floating" in $\dR^3$, and thus to include interactions, Ref. \cite{williamson2024layer} performs the following \textit{gluing} procedure.
If a $Z$-type generator acts on qubit, which does not interact with any $X$-type generator, then the corresponding planes are \textit{glued} along the $z$ direction, in the sense that the $Z$-, $X$-type operators of the toric codes are modified along the intersection, as shown in row two of Fig. \ref{fig:layer-gluing-2}.
The case is similar for when an $X$-type generator acts on qubit, which does not interact with any $Z$-type generator, and shown in row three of Fig. \ref{fig:layer-gluing-2}.

The complicated part (which we will show corresponds to $p$ in Theorem \eqref{thm:height-2}) originates from the fact that qubits can interact with both $Z$- and $X$-type generators.
In this case, Ref. \cite{williamson2024layer} introduces \textit{string defects} so that the result after embedding remains a stabilizer code.
More specifically, given a $Z$-type $\ell_Z$ and $X$-type generator $\ell_X$, the collection of qubits which interacts with both $\ell_Z,\ell_X$ must have even cardinality (since they commute) and thus their common set of qubits can be grouped into pairs with respect to the ordering along the $y$ direction, e.g., if qubits $1,6,7,9$ consist of those which interact with both $\ell_Z,\ell_X$, then they are paired as $(1,6),(7,9)$.
For each pair $(i,j)$ of qubits, the $Z$- and $X$-type planes corresponding to $\ell_Z,\ell_X$ must interact with qubit planes $i,j$ in the previous gluing sense, and thus a \textit{string defect} is introduced between qubit planes $i,j$ along the $y$ direction where the $Z$- and $X$-type planes intersect.
Specifically, along the $y$ axis and between qubit planes $i,j$, the toric code parity checks of the corresponding  $Z$- and $X$-type plane (and possibly qubit planes $i,j$) are modified near the string defect, as indicated by the green line in Fig. \ref{fig:layer-gluing}.
We note that row one and two of Fig. \ref{fig:layer-gluing-3} depict the start $i$ and end $j$ qubit plane, respectively, while row three of Fig. \ref{fig:layer-gluing-2} corresponds to the case where the $Z$-type plane interacts with some qubit plane $m$ between $i,j$, while the $X$-type plane ``passes" through (does not act on qubit $m$), and vice verse for row four.

It's worth mentioning that the construction is somewhat vague regarding certain details, such as the exact sizes of the planes, the exact parity checks involved in the gluing procedure at boundaries, etc.
We do not dwell on these issues since, shortly, we will define the construction from an algebraic perspective using Definition \eqref{def:cone}.
In fact, the brief overlook of tedious details allows us to understand the logical operators in a more intuitive manner.
Recall that due to the toric code parity checks, the $Z$-type logical operators must locally be paths on the square lattice, while the $X$-type logical operators are locally paths on the dual lattice \cite{kitaev2003fault}. 
Moreover, the paths corresponding to $Z$-, $X$-type logical operators must possess ends only on the rough, smooth boundaries, respectively \cite{bravyi1998quantum}.
Since the embedding procedure only modifies the parity checks near intersections, the same must also hold for the constructed code everywhere except along the intersection.
The behavior of logical operators near intersections is then tabulated in Fig. \ref{fig:layer-logical-rules}, and an example of a $Z$-type logical operator is shown in Fig. \ref{fig:layer-logical-example}.

\begin{figure}[ht]
\centering
\subfloat[\label{fig:layer-logical-rules}]{%
    \centering
    \includegraphics[width=0.55\columnwidth]{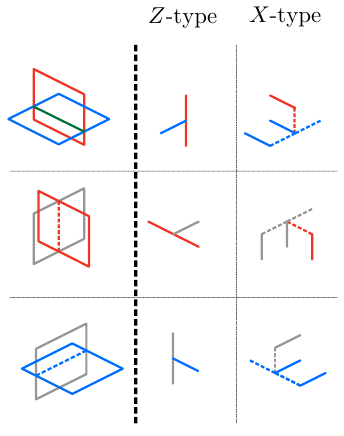}
}
\\
\subfloat[\label{fig:layer-logical-example}]{%
    \centering
    \includegraphics[width=0.8\columnwidth]{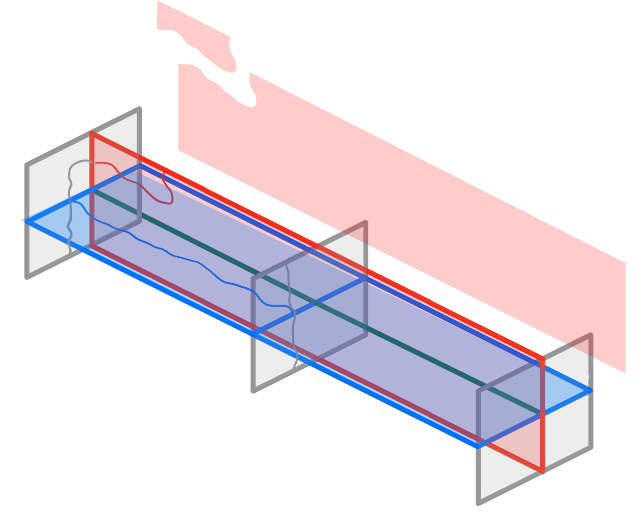}
}
\caption{Logical operators near Intersections. 
(a) tabulates the behavior of $Z$- and $X$-type logical operators which visit the intersections of toric planes. The dashed lines are to indicate that the $X$-type logical operators are paths on the dual lattice.
The first row, for example, indicates that a $Z$-type logical operator in the $Z$-type plane (red) passing through the string defect (green) must also exit into the $X$-type plane (blue).
(b) shows an example of the embedding procedure for the CSS code with generators $XXX$ and $ZIZ$. The squiggly lines denote a possible $Z$-type logical operator $\ell_1^1$ (induced by logical $ZZI$) based on the rules in Fig. \ref{fig:layer-logical-rules}, where the colors indicate which plane the path live in.
The two detached red $yz$-planes denote possible $\ell_2^2$ to use to \textit{clean} the logical operator $\ell_1$ so that its projection in $C^2_1$ is zero.
}
\label{fig:layer-logical}
\end{figure}
\subsubsection{Algebraic Construction}
We now formulate the construction within the framework of Theorem \eqref{thm:height-2}.
For the remainder of this subsection, let $A=A_2 \to A_1 \to A_0$ denote a complex of given CSS code with basis and label the $2,1,0$-cells via $|x_0\ket,|y_0\ket,|z_0\ket$ where $x_0 =1,...,n_Z$ and $y_0 =1,...,n$ and $z_0 =1,...,n_X$, respectively\footnote{The unfortunate notational mismatch between $|x_0\ket$ and $n_Z$ is to remain consistent with Ref. \cite{williamson2024layer}.}.

As in Example \eqref{ex:toric}, the toric code is the tensor product of classical repetition codes.
In particular, let $X =R(n_Z),Y=R(n),Z =R(n_X)$ and label the basis\footnote{Note that we have slightly abused notation and used $|x_0\ket$ for both $X_0$ and $A_2$.} of $X_i$ by $|x_i\ket$ for $i=1,0$, and similarly for $Y,Z$.
Then define the $Z$-type, qubit, $X$-type planes as
\begin{align}
    \label{eq:layer-code-lvls}
    C^{2} &= \bigoplus_{x_0 = 1}^{n_Z}  Y^T \otimes Z^T \\
    C^{1} &= \bigoplus_{y_0 = 1}^{n}  X \otimes Z^T \\
    C^{0} &= \bigoplus_{z_0 = 1}^{n_X} X \otimes Y
\end{align}
so that $C^{s}$ are complexes with differential $\partial^{s}$.
In particular, $C^{2}$ depicts $n_Z$ $Z$-type plane, one for each $Z$-type generator of the original code $A$, and thus $|\cdots;x_0\ket$ denote the basis element in $Z$-type plane $x_0=1,...,n_Z$.
The case is similar for $C^{1},C^{0}$.
Define the gluing maps $g_2,g_1$ via 
\begin{align}
    g_2 |y_0 z_i; x_0\ket &= 1\{|y_0\ket \sim^{A} |x_0\ket\}|x_0 z_i; y_0\ket \\
    g_1 |x_i z_0; y_0\ket &= 1\{|y_0\ket \sim^{A} |z_0\ket\}|x_i y_0; z_0\ket
\end{align}
Define the \textit{string defect} as follows. For repetition code $R$, let $R[i,j)$ induced by integers $i<j$ be the collection\footnote{If $R$ is regarded as a graph with edges and vertices given by the 1-cells and 0-cells, respectively, then $R[i,j)$ induces a string-like path on the graph between $i,j$.} of integers and half-integers $s$ such that $i\le s <j$.  
Moreover, if $S \subseteq \{1,...,L\}$ is an even subset with ordered elements $i_{1} <\cdots <i_{2m}$, we denote the $R[S)$ as the union of $R[i_{2s-1},i_{2s})$ over $s=1,...,m$.
Define the map $p$ via
\begin{equation}
\label{eq:layer-code-chain-homotopy}
    p|y_i z_0;x_0\ket =1\{|y_i\ket\in Y[|x_0\ket \wedge |z_0\ket)\}|x_0 y_i^+ ;z_0\ket 
\end{equation}
where $|x_0\ket \wedge |z_0\ket$ is the collection of common qubits $|y_0\ket\in \sB(A_1)$.

\begin{theorem}[Layered Code]
\label{thm:layer-code}
Given CSS code with complex $A$ and basis $\sB(A)$, define chain complexes $C^{s},s=2,1,0$, gluing maps $g_2,g_1$ and $p$ as in Eq. \eqref{eq:layer-code-lvls}-\eqref{eq:layer-code-chain-homotopy}, and define the cone $C$ via Definition \eqref{def:cone}.
Then $C$ is a regular cone with embedded code $A$.
\end{theorem}

\begin{proof}
See Appendix \eqref{proof:layer-code}
\end{proof}
\subsubsection{Code Distance}
So far, we have shown that the embedding procedure preserves logical qubits (operators), and thus the next goal is to show that the mapping cone preserves code distance in some manner.

\begin{theorem}[Layered Code Distance]
    \label{thm:layer-code-distance}
    Assume the same hypothesis as in Theorem \eqref{thm:layer-code} so that $C^{\bg}= A$ is the embedded code within $C$.
    Let $d_Z^{A},d_Z^{1}$ be the $Z$-type code distance of $A,C^{1}$, respectively, and similarly define the $X$-type code distance.
    Let $w_Z,w_X \ge 2$ denote the weights of $A$ (as defined in Example \eqref{ex:CSS}), respectively.
    Then the $Z$- and $X$-type code distance $d_Z,d_X$ of the 3-level cone $C$ is bounded below by
    \begin{equation}
        d_\alpha \ge \frac{2}{w_\alpha} d^{A}_\alpha d^{1}_\alpha, \quad \alpha =Z,X
    \end{equation}
    Note that $d^{1}_Z=n_X$ and $d^{1}_X=n_Z$ for the 2D toric code $C^{1}$ \cite{tillich2013quantum}.
\end{theorem}

If $(\partial^2,g_2)$ satisfies the isoperimetric inequality \eqref{eq:isoperimetric} with coefficient $2/w_Z$, then the statement follows directly from the Cleaning Lemma \eqref{lem:cleaning} -- see Fig. \ref{fig:layer-logical-example} for illustration.
However, due to the 2D structure of the $Z$-type planes $C^{2}$, proving the isoperimetric inequality directly is somewhat difficult\footnote{The isoperimetric inequality can be proven using the product structure of $Y^T\otimes Z^T$ and Section 5.4.1 of Ref. \cite{lin2023geometrically}, albeit the coefficient can only be determined up to $O(1/w_Z) < 2/w_Z$.}, and thus we provide an alternative route, though the underlying intuition is guided by the Cleaning Lemma.



\begin{proof}
See Appendix \eqref{proof:layer-code-distance}

\end{proof}

\subsection{Square Complexes \cite{lin2023geometrically}}
\label{sec:square}

\subsubsection{Quick Review}

The authors of Ref. \cite{lin2023geometrically} provide a construction of embedding any good LDPC code obtained from balanced products \cite{breuckmann2021balanced} into $\dR^{D}$ for any $D\ge 3$.
The embedding is optimal in the sense that the output code saturates the BPT bounds \cite{bravyi2009no,bravyi2010tradeoffs}, possibly up to poly-logarithmic corrections. 
The poly-log corrections originate from Ref. \cite{portnoy2023local}, though Ref. \cite{lin2023geometrically} claims that the corrections can be removed and a proof will be provided in an upcoming paper.

Roughly speaking, the embedding procedure first associates a hypergraph product \cite{tillich2013quantum,breuckmann2021balanced} to the balanced product code.
The faces of a hypergraph product are \textit{squares} and thus can be subdivided indefinitely into $L\times L$ sub-squares.
A CSS code is then extracted from the subdivision of the hypergraph product.
The subdivision is then mapped into Euclidean space via a local map in the following sense.
Given a CSS code with associated complex $C$ with basis $\sB(C)$, the map $f: \sB(C_{1})\to \dR^{D}$ is \textbf{local} if the following holds in the large $n=|\sB(C_1)|$ limit;
\begin{enumerate}[label=\arabic*)]
    \item Qubit density is finite, i.e., $|f^{-1}(x)| =O(1)$ for all $x\in \dR^{D}$
    \item Parity checks are local in $\dR^{D}$, i.e., if $c_1,c_1'\in \sB(C_1)$ adjacent to some $c_2$ (or some $c_0$), then $\norm{f(c_1) -f(c_1')}_2 =O(1)$
\end{enumerate}

Since the map $f$ can be regarded as a labeling of the basis elements $\sB(C_{1})$ using points\footnote{Or more accurately, points in $\dR^D \times \Lambda$ where $\Lambda$ is a set of cardinality $O(1)$.} in $\dR^D$, it does not affect the quantum dimension $k$ or code distance $d$. Hence, we will not be concerned with its construction and refer to Ref. \cite{portnoy2023local,lin2023geometrically} for the interested reader.
Instead, we will elaborate on the subdivision construction, whose procedure can be made explicit using the framework of Theorem \eqref{thm:height-2} and holds for all \textit{square complexes} as defined below.
\begin{definition}
    \label{def:square}
    Let $C=C_{2}\to C_{1} \to C_{0}$ be a complex equipped with $i$-cells $c_i$.
    Then $C$ is \textbf{square complex} if for every adjacent basis elements $c_2\sim c_0$, there exists exactly two distinct 1-cells $c_1^{\hor},c_1^{\ver}$ such that $c_2 \sim c_1^{\hor},c_1^{\ver} \sim c_0$.
    We call $c_1^{\hor},c_1^{\ver}$ \textbf{horizontal} and \textbf{vertical} with respect to $c_2,c_0$.
    We define a \textbf{square} to be a pair $c_2,c_0$.
\end{definition}

\subsubsection{Algebraic Construction}

For the remainder of this section, let $A=A_2\to A_1 \to A_0$ be a given square complex with $i$-cells $a_i$.
The toric code, for instance, is a square complex, and thus shall be an example depicted in Fig. \ref{fig:square}.
Let $R^{\multimap} = R^{\multimap}(L)$ be the dangling repetition code in Example \eqref{ex:classical} where we seek to construct the $L$-subdivision of $A$.
Similar to Ref. \cite{williamson2024layer} elaborated in Subsection \eqref{sec:layer-code}, the $L$-subdivision will be a height-2 cone $C$ obtained via gluing levels $C^{2},C^{1},C^{0}$ together.
However, what's different is that each level itself $C^{2},C^{1}$ will be obtained as the cones of gluing further levels together -- the subdivision $C$ is a cone of cones.
The real space procedure is sketched in Fig. \ref{fig:square-glue}, while its homological counterpart is shown in Diagram \eqref{eq:square-diagram}.

\begin{figure}[ht]
\centering
\subfloat[\label{fig:square-toric}]{%
    \centering
    \includegraphics[width=0.39\columnwidth]{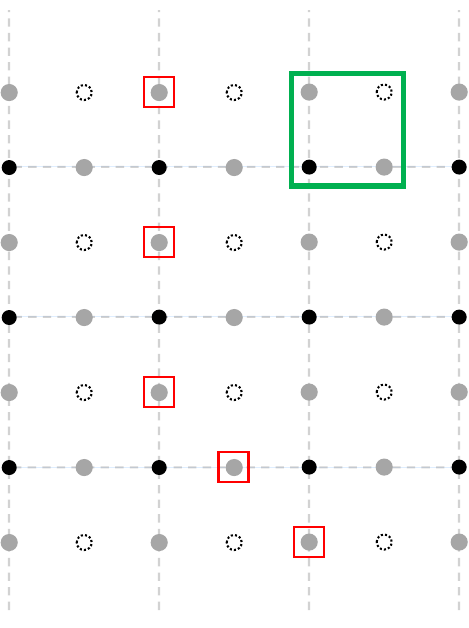}
}
\hspace{0.01\columnwidth}
\subfloat[\label{fig:square-glue}]{%
    \centering
    \includegraphics[width=0.5\columnwidth]{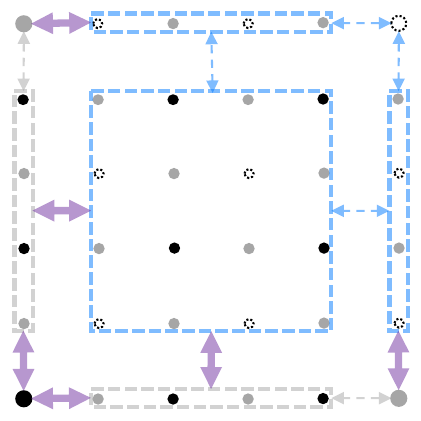}
}
\\
\subfloat[\label{fig:square-subdivision}]{%
    \centering
    \includegraphics[width=0.39\columnwidth]{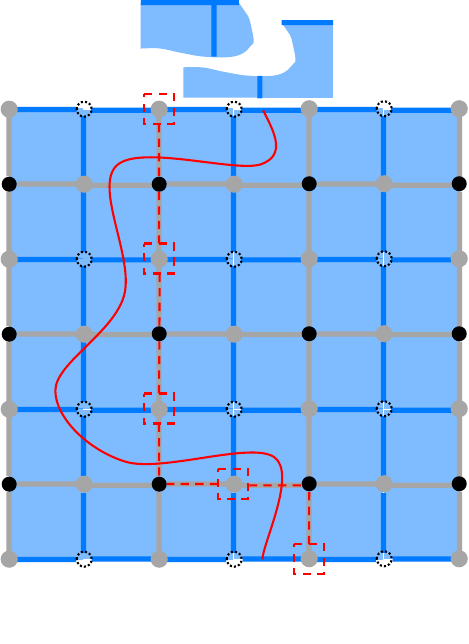}
}
\caption{Subdivision. 
(a) depicts the toric code with alternating smooth and rough boundaries on a finite square lattice.
The dashed guidelines denote edges hosting qubits, which will be denoted by grey dots instead.
The $X$-, $Z$-type Pauli operators are denoted by black, dashed dots, respectively, which act on adjacent qubits.
The green box depicts a \textit{square} in the toric code, which will be subdivided.
The collection of red boxes denotes a $Z$-type logical operator $\ell^{A}_{1}$.
(b) zooms in on a square and depicts the $(L=2)$-subdivision. 
The corners represent the original basis elements in square complex $A$.
The colored arrows denote gluing maps, which are elaborated in Propositions \eqref{prop:square-2}, \eqref{prop:square-1} and Theorem \eqref{thm:square}.
Compare with Diagram \eqref{eq:square-diagram}.
(c) depicts the subdivision of toric code in (a), with dashed guidelines removed; instead, consistent with (b), the blue squares denotes $A^{20}$, while the grey and blue lines denote $A^{10},A^{21}$, respectively.
The dashed red box/lines denote $\bra \ell_1^A\ket^1$ which is a representation of the logical operator induced by that in (a) via Proposition \eqref{thm:height-2}, while the solid red lines $\ell_1$ denote a more general representation. See Example \eqref{ex:square-logical}.
The two detached areas denote possible $\ell_2^2$ elements to \textit{clean} the general logical operator $\ell_1$ so that its projection onto $C^2_1$ is smaller.
}
\label{fig:square}
\end{figure}

\begin{prop}[Level 2]
    \label{prop:square-2}
    Define the sequence $C^{2}$ by the gluing procedure $A^{2} \xRightarrow{g_{21,2}} A^{21} \xRightarrow{g_{20,21}} A^{20}$ where
    \begin{align}
        A^{2}   &\equiv A_2 \to 0 \to 0\\
        A^{21}  &\equiv \bigoplus_{(a_2,a_1):a_1 \sim^{A} a_2} R^{\multimap} \to 0\\
        A^{20}  &\equiv \bigoplus_{(a_2,a_0):a_0 \sim^{A} a_2} R^{\multimap} \otimes R^{\multimap}
    \end{align}
    and the gluing maps $g_{\alpha,\beta}: A^{\beta} \to A^{\alpha}$ (blue arrows in Fig. \ref{fig:square-glue} or Diagram \eqref{eq:square-diagram}) are defined via
    \begin{align}
        g_{21,2} a_2 &= \sum_{a_1: a_1 \sim^{A} a_2} |1;a_2 a_1\ket\\
        g_{20,21} |s;a_2 a_1\ket &= \sum_{a_0: a_0 \sim^{A} a_2} (1\{a_1 =a_1^{\mathrm{h}}\} |s1;a_2 a_0\ket \nonumber\\
        &\quad + 1\{a_1 =a_1^{\mathrm{v}}\}|1s;a_2 a_0\ket)
    \end{align}
    where $a_1^{\mathrm{h}},a_1^{\mathrm{v}}$ are horizontal, vertical with respect to pair $(a_2,a_0)$, respectively, and $s$ is an integer or half-integer.
    Then $C^{2}$ (with $\partial^2$) is a regular cone and $H_2(\partial^{2}) \cong A_2$ where the isomorphism $A_2 \to H_2(\partial^{2})$ is given by $a_2 \mapsto [\bra a_2\ket^2]$ where
    \begin{align}
        \bra a_2\ket^2 &\equiv 
        \begin{pmatrix*}[l] 
        \phantom{\bra}a_2 \\ \bra a_2\ket^{21} \\ \bra a_2\ket^{20} 
        \end{pmatrix*} \\
        \bra a_2\ket^{21} &= \sum_{a_1:a_2\sim^{A} a_2} \sum_{i=1}^{L} |i^+;a_2 a_1\ket \\
        \bra a_2\ket^{20} &= \sum_{a_0:a_0\sim^{A} a_2} \sum_{i,j=1}^{L} |i^+j^+;a_2 a_0\ket
    \end{align}
    As in Fig. \ref{fig:square-subdivision}, $\bra a_2\ket^{21},\bra a_2\ket^{20}$ denote the collection of all blue branches, squares adjacent to $a_2$, respectively.
\end{prop}
\begin{proof}
    Note that the gluing procedure can be expanded into the following diagram, where omitted terms are 0.
    \begin{equation}
    \label{eq:square-2-diagram}
    \begin{tikzpicture}[baseline]
    \matrix(a)[matrix of math nodes, nodes in empty cells, nodes={minimum size=5mm},
    row sep=2em, column sep=2em,
    text height=1.5ex, text depth=0.25ex]
    {&& \cdot\\
    & \cdot  &  \cdot\\
    \cdot  & \cdot & \cdot\\};
    \path[->,font=\scriptsize]
    (a-2-2) edge  (a-2-3)
    (a-3-1) edge  (a-3-2)
    (a-3-2) edge  (a-3-3)
    (a-1-3) edge node[right]{$g_{21,2}$} (a-2-3)
    (a-2-2) edge (a-3-2)
    (a-2-3) edge node[right]{$g_{20,21}$} (a-3-3);
    \end{tikzpicture}
    \end{equation}
    One can then check that the lower-triangular matrix $\partial^{2}$ as defined by the gluing procedure for $C^{2}$ satisfies $\partial^{2}\partial^{2} =0$.
    By Lemma \eqref{lem:rep}, we see that $H_1(R^{\multimap}) =H_0(R^{\multimap})=0$ and thus we can apply Theorem \eqref{thm:height-n} for $n=2$ and $m=2$ so that the statement follows.
\end{proof}

\begin{prop}[Level 1]
    \label{prop:square-1}
    Define the sequence $C^{1}$ by the gluing procedure $A^{1} \xRightarrow{g_{10,1}} A^{10}$ where
    \begin{align}
        A^{1}   &\equiv A_1\to 0\\
        A^{10}  &\equiv \bigoplus_{(a_1,a_0):a_1 \sim^{A} a_0} R^{\multimap} 
    \end{align}
    and the gluing maps $g_{\alpha,\beta}: A^{\beta} \to A^{\alpha}$ (grey arrows in Fig. \ref{fig:square-glue} or Diagram \eqref{eq:square-diagram}) is defined via
    \begin{align}
        g_{10,1}a_1 &= \sum_{a_0:a_0\sim^{A} a_1} |1;a_1a_0\ket
    \end{align}
    Then $C^{1}$ (with $\partial^{1}$) is a regular cone and the isomorphism $A_1 \to H_1(\partial^{1})$ is given by $a_1 \mapsto [\bra a_1\ket^1]$ where
    \begin{align}
        \bra a_1\ket^{1}&=
        \begin{pmatrix*}[l] 
        \phantom{\bra}0 \\ \phantom{\bra}a_1 \\ \bra a_1\ket^{10} 
        \end{pmatrix*} \\
        \bra a_1\ket^{10} &= \sum_{a_0:a_0\sim^{A} a_1} \sum_{i=1}^{L} |i^+;a_1 a_0\ket
    \end{align}
    As shown in Fig. \ref{fig:square-subdivision}, $\bra a_1\ket^{10}$ denotes the collection of all grey branches adjacent to $a_1$.
\end{prop}
\begin{proof}
    Note that the gluing procedure can be expanded into the following diagram, where omitted terms are 0.
    \begin{equation}
    \label{eq:square-1-diagram}
    \begin{tikzpicture}[baseline]
    \matrix(a)[matrix of math nodes, nodes in empty cells, nodes={minimum size=5mm},
    row sep=2em, column sep=2em,
    text height=1.5ex, text depth=0.25ex]
    {& \cdot\\
     \cdot  &  \cdot\\};
    \path[->,font=\scriptsize]
    (a-2-1) edge  (a-2-2)
    (a-1-2) edge node[right]{$g_{10,1}$} (a-2-2);
    \end{tikzpicture}
    \end{equation}
    One can then check that the lower-triangular matrix $\partial^{1}$ as defined by the gluing procedure for $C^{1}$ satisfies $\partial^{1}\partial^{1} =0$.
    By Lemma \eqref{lem:rep}, we see that $H_1(R^{\multimap}) =H_0(R^{\multimap})=0$ and thus we can apply Theorem \eqref{thm:height-n} for $n=2$ and $m=1$ so that the statement follows.
\end{proof}

\begin{theorem}[$L$-subdivision]
    \label{thm:square}
    Define $C^{2},C^{1}$ as in Propositions \eqref{prop:square-2}, \eqref{prop:square-1} for fixed $L$ and define $C^{0}\equiv A^{0} \equiv (0 \to 0 \to A_0)$.
    Define the sequence $C$ obtained by 
    \begin{equation}
        C^{2} \xRightarrow{g_{1,21} \oplus g_{10,20}} C^{1} \xRightarrow{g_{0,10}} C^{0}
    \end{equation}
    where the gluing maps $g_{\alpha,\beta}:A^{\beta} \to A^{\alpha}$ (purple arrows in Fig. \ref{fig:square-glue} and Diagram \eqref{eq:square-diagram}) are defined via
    \begin{align}
        g_{1,21} |L^+;a_2 a_1\ket &= a_1 \\
        g_{10,20} |L^+, s;a_2 a_0\ket &= |s;a_1^{\mathrm{h}}a_0\ket \\
        g_{10,20} |s,L^+;a_2 a_0\ket &= |s;a_1^{\mathrm{v}}a_0\ket \\
        g_{0,10} |L^+; a_1a_0\ket  &= a_0
    \end{align}
    where $a_1^{\mathrm{h}},a_1^{\mathrm{v}}$ are horizontal, vertical with respect to pair $(a_2,a_0)$, respectively, and $s$ is an integer or half-integer.
    Then $C$ is a regular cone with embedded code $A$.
    We refer to $C$ as the \textbf{$L$-subdivision} of $A$.
\end{theorem}
\begin{proof}
    Note that the gluing procedure can be expanded into the following diagram, where omitted terms are 0.
    \begin{equation}
    \label{eq:square-ZQX-diagram}
    \begin{tikzpicture}[baseline]
    \matrix(a)[matrix of math nodes, nodes in empty cells, nodes={minimum size=5mm},
    row sep=2em, column sep=2em,
    text height=1.5ex, text depth=0.25ex]
    {\cdot &\cdot & \cdot \\
     \cdot &\cdot &  \\
     \cdot &      &  \\};
    \path[->,font=\scriptsize]
    (a-1-1) edge node[above]{$\partial^{2}$} (a-1-2)
    (a-1-2) edge node[above]{$\partial^{2}$} (a-1-3)
    (a-2-1) edge node[above]{$\partial^{1}$} (a-2-2)
    (a-1-1) edge node[left]{$g_2$} (a-2-1)
    (a-1-2) edge (a-2-2)
    (a-2-1) edge node[left]{$g_1$} (a-3-1);
    \end{tikzpicture}
    \end{equation}
    where, for notation simplicity, wrote $g_2 = g_{1,21} \oplus g_{10,20}$ and $g_1 = g_{0,10}$, and $\partial^{1},\partial^{0}$ are the differentials of $C^{1},C^{0}$, respectively.
    To show that the resulting $C$ is a complex, i.e., $\partial \partial =0$, it's sufficient to show that $g_2 \partial^{2} = \partial^{1} g_2$ (the square in the previous is commuting) and $g_1 g_2 =0$ (since $p=0$).
    Note that if $\partial^{\alpha}$ denotes the differential for $A^{\alpha}$ for all $\alpha$ except $\alpha=2,1,0$, then
    \begin{align}
        \partial^{2} &= \partial^{{20}} + \partial^{{21}} +g_{21,2} +g_{20,21} \\
        \partial^{1} &= \partial^{{10}} +g_{10,1}
    \end{align}
    And thus $g_2 \partial^{2} = \partial^{1} g_2$ is equivalent to
    \begin{align}
        g_{1,21} g_{21,2} &= 0 \\
        g_{1,21} \partial^{{21}} &= 0 \\
        g_{10,20} \partial^{{20}} &= \partial^{{10}} g_{10,20} \\
        g_{10,20}g_{20,21} &= g_{10,1} g_{1,21}
    \end{align}
    Similarly, $g_1 g_2 =0$ is equivalent to 
    \begin{equation}
        g_{0,10} g_{10,20} =0
    \end{equation}
    Hence, one can check that the equalities are all satisfied, i.e., Diagram \eqref{eq:square-diagram} is commutative. 
    Hence, the resulting $C$ is a complex and thus a regular cone by the previous propositions.
    In particular, we have the isomorphism between the embedded code and the cone $C$ as specified in Theorem \eqref{thm:height-2}.
    Therefore, all that's left is to show that the embedded code is given square complex $A$.
    Indeed, note that by Proposition \eqref{prop:square-2}, \eqref{prop:square-1}, $A_i \mapsto H_i(\partial^{i})$ via the isomorphism $a_i \mapsto \bra a_i\ket^{i}$ for $i=1,2$.
    Note that
    \begin{align}
        [g_2] [\bra a_2\ket^{2}] &= [g_2 \bra a_2\ket^{2}] \\
        &= [g_{1,21} \bra a_2\ket^{21} +g_{10,20} \bra a_2\ket^{20}] \\
        &= \sum_{a_1: a_1 \sim^{A} a_2} [a_1 +\bra a_1\ket^{10}] \\
        &= \sum_{a_1: a_1 \sim^{A} a_2} [\bra a_1\ket^{1}]
    \end{align}
    Hence, $[g_2] =\partial^{A}$.
    The case is similar for $[g_1]$ and thus we see that the embedded code is exactly $A$.
\end{proof}

\begin{remark}[3D Commutative Diagram]
Similar to Theorem \eqref{thm:height-2} and Diagram \eqref{eq:height-2-diagram}, Propositions \eqref{prop:square-2}, \eqref{prop:square-1} and Theorem \eqref{thm:square} can be summarized via the following (3-dimensional) commutative diagram
\begin{equation}
    \label{eq:square-diagram}
    \begin{tikzpicture}[baseline]
    \matrix(a)[matrix of math nodes, nodes in empty cells, nodes={minimum size=5mm},
    row sep=.6em, column sep=.5em,
    text height=1.5ex, text depth=0.25ex]
    {
        &&&&&& A_2 \\
        &&& A_2^{21} && A_1^{21} &\\
        A_2^{20} && A_1^{20} && A_0^{20} &&\\
        &&&A_1 &&& \\
        A_1^{10} && A_0^{10} &&&& \\
        &&&&&& \\
        A_0 &&&&&& \\
    };
    \path[->,font=\scriptsize]
    (a-1-1.center) edge node[right, xshift=2pt]{$x$} (a-1-1.east)
    (a-1-1.center) edge node[below left, xshift=-1pt]{$y$} (a-1-1.south west)
    (a-1-1.center) edge node[below, yshift=-2pt]{$z$} ([yshift=-2pt]a-1-1.south);
    \path[->,line width=.5mm,violet,font=\scriptsize]
    (a-3-1) edge (a-5-1)
    (a-3-3) edge (a-5-3)
    (a-5-1) edge (a-7-1);
    \path[->,font=\scriptsize]
    (a-2-4) edge (a-2-6)
    (a-3-1) edge (a-3-3)    
    (a-3-3) edge (a-3-5)
    (a-5-1) edge (a-5-3);
    \path[- ,line width=.5mm,violet,font=\scriptsize](a-2-4) edge ([yshift=4pt]a-3-4.center);
    \path[->,line width=.5mm,violet,font=\scriptsize]([yshift=-4pt]a-3-4.center) edge (a-4-4);
    \path[->,line width=.5mm,blue,font=\scriptsize]
    ([xshift=1pt,yshift=2pt]a-2-4.south west) edge (a-3-3)
    ([xshift=1pt,yshift=2pt]a-2-6.south west) edge (a-3-5)
    ([xshift=1pt,yshift=2pt]a-1-7.south west) edge (a-2-6);
    \path[->,line width=.5mm,gray,font=\scriptsize]
    ([xshift=1pt,yshift=2pt]a-4-4.south west) edge (a-5-3);
    \end{tikzpicture}  
\end{equation}
Where $0$s are omitted.
In correspondence with Fig. \ref{fig:square-glue}, the blue and grey arrows denoting gluing within levels $C^{2},C^{1}$, respectively, and the purple arrows denote gluing the levels $C^{2},C^{1},C^{0}$ together.
The comparison with Diagrams \eqref{eq:square-2-diagram}, \eqref{eq:square-1-diagram} and \eqref{eq:square-ZQX-diagram} is clear.
In particular, note that $C_{i}$ in the $L$-subdivision is the direct sum of all $\dF_2$ vectors spaces in the above diagram with subscript $i$.
\end{remark}

\begin{remark}[Symmetry of Construction]
\label{rem:square-symmetry}
From Fig. \ref{fig:square-glue}, the reader may notice that the levels are constructed differently and in an asymmetric manner.
This naturally prompts the question of how the $X$-type logical operators -- or, equivalently, the cochain complex -- are structured.
One (relatively tedious) way is to examine the full decomposition of $C$, i.e., $C_i = \bigoplus_{\alpha} A^{\alpha}_{i}$, and compare the codifferential $\partial^T$ with $\partial$. 
Alternatively, but possibly more apparent, one can examine Diagram \eqref{eq:square-diagram}.
Since the cochain complex is induced by inverting all arrows, we see that the resulting diagram has the same structure as Diagram \eqref{eq:square-diagram} after swapping the $y$- and $z$-axis ($y\leftrightarrow z$) and flipping the $x$-axis ($x \mapsto -x$).
\end{remark}

\subsubsection{Code Distance}

The next goal is to show that the mapping cone preserves code distance in some manner. 
To obtain some intuition, let us first consider the following example.

\begin{example}[Logical Operators]
\label{ex:square-logical}
Given Theorem \eqref{thm:square} or Diagram \eqref{eq:square-diagram}, the isomorphism between $H_1(A)$ and its subdivision $H_1(C)$ can be explicitly obtained via Theorem \eqref{thm:height-2}.
Here, we provide an intuitive description of the isomorphism by using the example of the toric code in Fig. \ref{fig:square}.
Fix a ($Z$-type) logical operator $\ell^{A}_{1} \in A_{1}$ regarded as a subset of $\sB(A_{1})$.
By the isomorphism $A_1 \cong H_1(C^{1})$, $\ell^{A}_{1}$ is mapped to some $[\ell^{1}_{1}] \in H_1(C^{1})$, where $[\ell^{1}_{1}]$ is the summation of $[\bra a_1\ket^1] \in H_1(C^{1})$ over $a_1 \in \ell^{A}_{1}$.
Choose a further representation  $\ell^{1}_{1} \in C^{1}_{1}$ of $[\ell^{1}_{1}]$.
By construction, $\bra \ell^{A}_{1}\ket^1$ is a possible representation of $[\ell^{1}_{1}]$ where $\bra \ell^{A}_1\ket^1$ is the summation of $\bra a_1\ket^1$ for all $a_1\in \ell_1^A$.
Since the isomorphism $H_1(A) \to H_1(C)$ is given by inclusion, i.e., $\llb \ell_1^1\rrb \mapsto [\ell_1^1]$, we see that $\bra \ell^{A}_{1}\ket^1$ is a logical operator in $C$, and shown via the dashed red boxes/lines in Fig. \ref{fig:square-subdivision}.
More generally, however, a logical operator could be of the form $\bra \ell_1^{A}\ket^1 + \partial \ell_2$ where $\ell_2\in C_2$, as shown by the solid red line in Fig. \ref{fig:square-subdivision}.
\end{example}

To prove a lower bound on the code distance of the $L$-subdivision $C$, the strategy is to use the Cleaning Lemma \eqref{lem:cleaning}, so that a general nontrivial logical operator (solid red line in Fig. \ref{fig:square-subdivision}) is \textit{cleaned} to a simpler form (dashed lines in Fig. \ref{fig:square-subdivision}), whose weight can be easily obtained. 
With that said, proving the isoperimetric inequality for a general square complex is somewhat difficult.
Instead, Ref. \cite{lin2023geometrically} proves the isoperimetric inequality for the particular case where $C^{2}$ has a nice product structure.
Specifically, 
\begin{lemma}[Lemma 5.2 of Ref. \cite{lin2023geometrically}]
    \label{lem:square-isoperimetric}
    Let $A$ be a square complex with basis. Further assume that for every $a_2$, it is possible to partition support $\supp^{A}_1 a_2$ into $\sA_1^{\mathrm{h}}=\sA_1^{\mathrm{h}}(a_2),\sA_1^{\mathrm{v}}=\sA_1^{\mathrm{v}}(a_2)$ so that for every $a_0 \sim^{A} a_2$, there exists a unique pair $(a_1^{\mathrm{h}},a_1^{\mathrm{v}})\in \sA_1^{\mathrm{h}}\times \sA_1^{\mathrm{v}}$ such that $a_1^{\mathrm{h}},a_1^{\mathrm{v}} \sim^{A} a_0$ and vice-versa.
    Then $C^{2}$ is the direct sum of $C^{2,a_2}$ where
    \begin{align}
        C^{2,a_2} &\cong \left((a_2 \to 0)\Rightarrow \bigoplus_{a_1\in \sA_1^{\mathrm{h}}} R^{\multimap}\right) \nonumber\\
        &\quad\otimes \left((a_2 \to 0)\Rightarrow \bigoplus_{a_1\in \sA_1^{\mathrm{v}}(a_2)} R^{\multimap}\right)
    \end{align}
    And that $(\partial^{2,a_2},g_2)$ satisfies the isoperimetric inequality with coefficient $\Theta(1)$ for all $a_2$.
\end{lemma}

\begin{theorem}[Subdivision Distance]
    \label{thm:square-distance}
    Assume the same hypothesis as in Theorem \eqref{thm:square} so that $A$ is the embedded code within the $L$-subdivision $C\equiv C(L)$.
    Further assume the same hypothesis as in Lemma \eqref{lem:square-isoperimetric} and that with $a_2 \leftrightarrow a_0$ exchanged.
    Let $d_\alpha^{A}$ be the $\alpha=Z,X$-type code distance of $A$, respectively.
    Then the $Z$- and $X$-type code distance $d_Z,d_X$ of the regular height-2 cone $C$ is bounded below by
    \begin{equation}
        d_\alpha \ge \Theta(1) L d^{A}_\alpha , \quad \alpha =Z,X
    \end{equation}
    Where $\Theta(1)$ is some constant number.
\end{theorem}

\begin{proof}
    By symmetry of the construction -- Remark \eqref{rem:square-symmetry}, it's sufficient to prove the statement for the $Z$-type code distance.
    By the Cleaning Lemma \eqref{lem:cleaning}, we have
    \begin{equation}
        d_Z \ge \Theta(1) \min_{\ell_1^1\in C_1^1: 0\ne \llb \ell_1^1 \rrb \in H_1(A)} |\ell_1^1|
    \end{equation}
    By Proposition \eqref{prop:square-1}, we see that any nontrivial logical operator $\ell_1^A \in H_1(C^1)$ must have representation $\ell_1^1 \in C^1_1$ given by $\bra \ell_1^A\ket^1$, i.e.,
    \begin{equation}
        \ell_1^1 =\sum_{a_1\in \ell_1^A} \bra a_1\ket^1
    \end{equation}
    Therefore, the statement follows.
\end{proof}

\section{Quantum Weight Reduction}
\label{sec:weight}

In this section, we discuss how the framework fits into Hastings' work on quantum weight reduction \cite{hastings2016weight,hastings2021quantum} and its relation with fault-tolerant logical measurement \cite{williamson2024low,ide2025fault,cross2024improved,horsman2012surface}.
In particular, we elaborate on Ref. \cite{williamson2024low}, whose construction is similar to Hastings \cite{hastings2016weight} but applied in the context of logical measurement.
Specifically, given complex $A$ and logical operator $\ell_1^\star\in A_1$, an ancillary complex $B$ and gluing map $g$ is cleverly constructed so that the height-1 cone $C = B\Rightarrow A$, expanded as,
\begin{equation}
    \label{eq:logical-measurement-diagram}
    \begin{tikzpicture}[baseline]
    \matrix(a)[matrix of math nodes, nodes in empty cells, nodes={minimum size=20pt},
    row sep=1.5em, column sep=2em,
    text height=1.5ex, text depth=0.25ex]
    {
    & B_2 & B_1 & B_0\\
    A_2 & A_1 & A_0   \\};
    \path[->,font=\scriptsize]
    (a-1-2) edge (a-1-3)
    (a-1-3) edge (a-1-4)
    (a-2-1) edge (a-2-2)
    (a-2-2) edge (a-2-3)
    (a-1-2) edge node[right]{$g$} (a-2-2) 
    (a-1-3) edge node[right]{$g$} (a-2-3);
    \end{tikzpicture}
    \end{equation}
is regular (with respect to degree 1) and $\im [g] = [\ell_1^\star]$.
By Theorem \eqref{thm:height-n} for $n=1$, $C$ has all the logical operators of $A$ except $\ell_1^\star$, i.e., $H_1(C) \cong H_1(A)/[\ell_1^\star]$, and $\ell_1^\star$ can be induced by local parity checks in $C$.
The goal of this section is then to construct $B$ and $g$ in an algebraic manner.

\subsection{Ancillary Code}

Given a connected graph $\sG= (\sV,\sE)$ with vertices $\sV$ and edges $\sE$, let $V,E$ denote the $\dF_2$ vector spaces generated by $\sV,\sE$, respectively, and define the differential $E\to V$ by the adjacency of vertices and edges in graph $\sG$. 
Moreover, by the following lemma
\begin{lemma}[Decongestion \cite{hastings2021quantum,freedman2021building}]
    \label{lem:decongestion}
    Let $\sG=(\sE,\sV)$ be an arbitrary graph. Then there exists a basis $\sF$ of simple cycles in $\sG$ which has total weight -- number of edges -- $O(|\sV|\log |\sV|)$, and that each edge appears in the basis at most $O(\log^2 |\sV|)$ times.
\end{lemma}
We can choose basis $\sF$, let $F$ denote the $\dF_2$ vector space generated by $\sF$, and define the differential $F\to E$ by the adjacency of simple cycles and edges.
In particular, $G\equiv F\to E \to V$ is a complex with trivial 1-(co)homology, i.e., $H_1(G)=H_1(G^T)=0$, and 0-cohomology $H_0(G^T)$ spanned by a unique basis element corresponding to the connected component of $\sG$, i.e.,
\begin{equation}
    [\sV] \equiv \left[\sum_{v\in \sV} v\right]
\end{equation}
It then follows that
\begin{prop}[Section III of Ref. \cite{hastings2021quantum} or Lemma 1 of Ref. \cite{williamson2024low}]
    \label{prop:weight-cone}
    Let complex $A$ of CSS code be given with basis $\sB(A)$ and choose logical operator $\ell_1^\star\in A_1$, regarded as a subset of $\sB(A_1)$.
    Let $\sG$ be a connected graph such that there exists bijection $g:\sV \to \ell_1^\star$ (which extends to a linear $B_2 \to A_1$), and let $B\equiv G^T$.
    Let $C$ denote the sequence obtained from Diagram \eqref{eq:logical-measurement-diagram}, where $g:B_1 \to A_0$ is defined as follows.
    
    For every $a_0$ adjacent to some $a_1\in \ell_1^\star$, we see that $\supp^A a_0 \cap \ell_1^\star$ must have even cardinality, and thus there exists edge-paths $\gamma$ in $\sG$ connecting vertices in $\sG$ corresponding to $\supp^A a_0 \cap \ell_1^\star$ via the bijection $g:\sV \to \ell_1^\star$. Let $\Gamma (a_0)$ denote the union of such edge paths and define $g:B_1 \to A_0$ via
    \begin{equation}
        g e = \sum_{a_0} a_0 1\{e\in \Gamma(a_0)\} 
    \end{equation}
    Then $C$ is a regular height-1 cone with $H_1(C) \cong H_1(A)/[\ell_1^\star]$.
\end{prop}
\begin{proof}
    It's straightforward to check that $g$ is a chain map, i.e., $g \partial^B = \partial^{A} g$. Also note that
    \begin{equation}
        [g][\sV]= [\ell_1^A]
    \end{equation}
    And thus the statement follows from Theorem \eqref{thm:height-n} for $n=1$.
\end{proof}

A version of the previous proposition first appeared in Ref. \cite{hastings2021quantum}, where $\ell_1^\star$ was chosen to be a trivial logical operator ($=\partial^{A} a_2^\star$ for some $a_2^\star$) with large weight that we wish to reduce, and the edges of the graph $\sG$ was chosen to be a perfect matching of $\sV$.
Ref. \cite{williamson2024low} then generalized the statement for arbitrary graphs $\sG$, and chose $\ell_1^\star$  to be a nontrivial logical operator that we wish to measure.

The following proposition then guarantees that the height-1 cone $C$ has code distance that is lower bounded by that of the original code $A$ up to some constant.
\begin{prop}[Lemma 8 of Ref. \cite{hastings2021quantum} or Lemma 2 of Ref. \cite{williamson2024low}]
    \label{prop:weight-cone-distance}
    Assuming the same hypothesis as in Proposition \eqref{prop:weight-cone}, let $h(\sG)$ denote the Cheeger constant of $\sG$. Then the ($Z$-type) code distances $d_Z^C,d_Z^A$ of $C,A$ are related via
    \begin{equation}
        d^C_Z \ge \min(h(\sG),1) d^A_Z
    \end{equation}
\end{prop}
\begin{proof}
By definition of the Cheeger constant, we see that
\begin{equation}
    |\partial^B \ell_2^B| \ge h(\sG) \min(|g\ell_2^B|,|g(\ell_2^B +\sV)|)
\end{equation}
where $\sV$ is regarded as an element in $B_2$ via the natural identification (see Definition \eqref{def:basis}) and thus satisfies the isoperimetric inequality.
By the Cleaning Lemma \eqref{lem:cleaning}, the statement then follows.
\end{proof}

\subsection{Triangulation and Thickening}
Despite the simplicity of the construction in the previous section, the ancilla code $B\equiv G^T$ is not yet sufficient.
Indeed, the final code $C$ should be LDPC and thus in particular, $B$ must also be LDPC.
However, so far, code $G$ has weights
\begin{equation}
    F \xrightleftharpoons[O(\log^2|\sV|)]{\max_{f\in\sF} |f|} E \xrightleftharpoons[2]{\Delta(\sG)} V
\end{equation}

To reduce $q_X(G)$, we can choose a graph $\sG$ so that the maximum vertex degree $\Delta(\sG)=O(1)$.

\begin{figure}[ht]
\centering
\includegraphics[width=0.4\columnwidth]{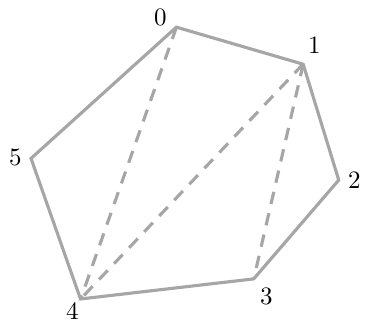}
\caption{Triangulation. The solid lines indicate the original simple cycle $f\in\sF$, while the dashed lines indicate the added edges.
}
\label{fig:weight-triangulate}
\end{figure}

To reduce $w_Z(G)$, Ref. \cite{hastings2021quantum,williamson2024low} \textit{triangulated} each cycle $f\in \sF$ as shown in Fig. \ref{fig:weight-triangulate}. 
Specifically, let $f$ have vertices labeled in sequence via $0,1,...,|f|-1$. 
Add edges $(0,|f|-2),(|f|-2,1),(1,|f|-3),...$ to $\sE$ and replace $f\in \sF$ with cycles $(0,|f|-2,|f|-1),(0,|f|-2,1),...$.
Perform the operation for every $f\in \sF$ to obtain $\sV^\tri =\sV,\sE^{\tri},\sF^{\tri}$ and correspondingly, $G^{\tri}:F^{\tri}\to E^{\tri} \to V^{\tri}$.
By Theorem \eqref{thm:height-2} (compare with Theorem \eqref{thm:barycentric-subdivision} and Fig. \ref{fig:barycenter}), it's straightforward to check that $G^\tri$ is a regular height-2 cone with embedded code $G$, but with weights
\begin{equation}
    F^\tri \xrightleftharpoons[O(\log^2|\sV|)]{3} E^\tri \xrightleftharpoons[2]{O(\Delta(\sG))} V^{\tri}
\end{equation}

To reduce $q_Z(G)$, Ref. \cite{hastings2021quantum,williamson2024low} performed a \textit{thickening} procedure summarized as follows.
\begin{prop}[Lemma 2 of Ref. \cite{hastings2021quantum} or Definition 3 of Ref. \cite{williamson2024low}]
    \label{prop:weight-thicken}
    Given an arbitrary complex $G=F\to E\to V$ with basis $\sB(G)$.
    Let $L\ge 3$ and $h:\sB(F) \to \{1,...,L\}$ be an injective map and let $G^{\mathrm{thick}}$ be obtained via the following gluing procedure
    \begin{equation}
    \begin{tikzpicture}[>=implies,baseline]
    \matrix(a)[matrix of math nodes, nodes in empty cells, nodes={minimum size=20pt},
    row sep=1.5em, column sep=2em,
    text height=1.5ex, text depth=0.25ex]
    {
    \quad\quad(F \to 0 \to 0) \\
    (E \otimes R(L) \to 0) \\
    (0 \to V\otimes R(L))\quad\quad\\};
    \draw[double,->,font=\scriptsize](a-1-1) -- node[right]{$g_2$} (a-2-1);
    \draw[double,->,font=\scriptsize](a-2-1) -- node[right]{$g_1$} (a-3-1);
    \end{tikzpicture}
    \end{equation}
    where
    \begin{align}
        g_2 f &= \partial^G f\otimes |h(f)\ket \\
        g_1 e\otimes |s\ket &= \partial^G e \otimes |s\ket
    \end{align}
    where $s$ is an integer or half-integer, $i$ is an integer and $f\in \sB(F),e\in \sB(E)$.
    Then $G^{\mathrm{thick}}$ is a regular (height-2) cone with embedded code $G$, referred as the \textbf{$Z$-thickening code} of $G$ with \textbf{length} $L$ and \textbf{height function} $h$.
    Moreover, the isomorphism $H^0(G) \to H^0(G^\mathrm{thick})$ is induced by inclusion, i.e.,
    \begin{equation}
        [\ell_0^G] \mapsto [\ell_0^G]\otimes [\sR_0]
    \end{equation}
    where $\sR_0$ is defined in Lemma \eqref{lem:rep}.
\end{prop}

\begin{figure}[ht]
\centering
\subfloat[\label{fig:weight-ancilla}]{%
    \centering
    \includegraphics[width=0.6\columnwidth]{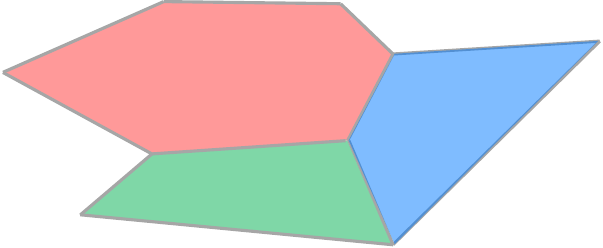}
}
\\
\subfloat[\label{fig:weight-copy}]{%
    \centering
    \includegraphics[width=0.6\columnwidth]{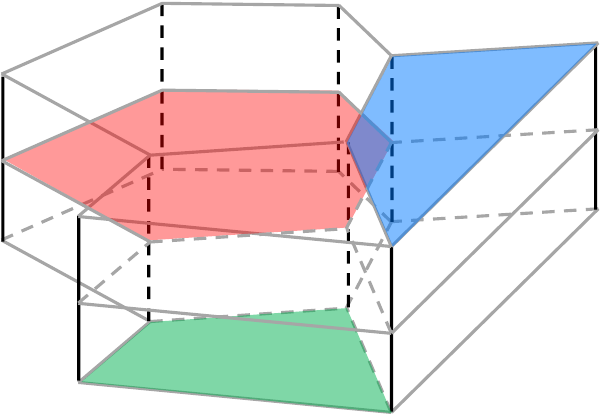}
}
\caption{Thickening. (a) depicts the complex $G$ corresponding to a graph $\sG$ with generating simple cycles $\sF$ colored. 
(b) depicts the thickened $G^{\mathrm{thick}}$ for $L=3$, where the original cycles $f\in \sF$ are mapped to different heights via the height function $h$.
}
\label{fig:weight-thicken}
\end{figure}
The proof follows straightforwardly via the main result in Theorem \eqref{thm:height-2} and \eqref{thm:height-n}, and thus it may be more beneficial to understand the intuition behind the thickening construction.
Indeed, expand the cone construction as the following diagram (where 0s and $\otimes$ signs are omitted)
\begin{equation}
\label{eq:weight-thicken-diagram}
\begin{tikzpicture}[baseline]
\matrix(a)[matrix of math nodes, nodes in empty cells, nodes={minimum size=20pt},
row sep=2em, column sep=2em,
text height=1.5ex, text depth=0.25ex]
{
& F & \\
E R_{1}(L) & E R_{0}(L) &   \\
V R_{1}(L) & V R_{0}(L) & \\};
\path[->,font=\scriptsize]
(a-1-2) edge node[right]{$g_2$} (a-2-2)
(a-2-2) edge node[right]{$g_1$} (a-3-2)
(a-2-1) edge (a-3-1)
(a-2-1) edge (a-2-2)
(a-3-1) edge (a-3-2);
\end{tikzpicture}
\end{equation}
Although the thickening procedure holds for all CSS codes, restrict our attention to the case where $G$ is constructed from the graph $\sG$ and generating simple cycles $\sF$ as shown in Fig. \ref{fig:weight-ancilla}.
It's then clear that Fig. \ref{fig:weight-copy} depicts the complex $G^{\mathrm{thick}}$ for $L=3$, which has generating cycles depicted by the colored plaquettes and all vertical plaquettes.
Note that in Fig. \ref{fig:weight-ancilla}, there exist an edge adjacent to both red and green plaquettes.
In contrast, any edge in Fig. \ref{fig:weight-copy} can only be adjacent to either the red or the green plaquette, and thus $q_Z(G)$ can be reduced by a clever choice of the height function.
Specifically,

\begin{prop}[Lemma 5 of Ref. \cite{hastings2021quantum}]
    Let $G$ be the complex associated with graph $\sG$ and generating simple cycles $\sF$.
    Then there exists $L=O(\log^2 |\sV| +\Delta(\sG))$ and height function $h$ such that the thickened complex $G^{\mathrm{thick}}$ has weights satisfying
    \begin{equation}
        F^{\mathrm{thick}} \xrightleftharpoons[\max(\Delta(\sG),3)]{\max(\max_{f\in \sF} |f|,4)} E^{\mathrm{thick}}  \xrightleftharpoons[2]{\Delta(\sG)+2} V^{\mathrm{thick}} 
    \end{equation}
\end{prop}

\begin{proof}
    The upper bounds for $w_Z,w_X,q_X$ follows straightforwardly from the matrix representation of the height-2 cone $\partial^{\mathrm{thick}}$ (see Fig. \ref{fig:weight-thicken} for intuition), i.e.,
    \begin{align}
    \partial_{2}^{\mathrm{thick}} &=
    \begin{pmatrix}
        0   & 0 & 0 \\
        g_2 & \partial^{R} & 0 \\
        0   & g_1 & 0
    \end{pmatrix},
    \\
    \partial_{1}^{\mathrm{thick}} &=
    \begin{pmatrix}
        0   & 0 & 0 \\
        0   & 0 & 0 \\
        0   & g_1 & \partial^{R}
    \end{pmatrix}
    \end{align}
    Hence, we shall focus on $q_Z(G^{\mathrm{thick}})$.
    Note that
    \begin{align}
        q_X(B^{\mathrm{thick}}) &= |\partial_2^{\mathrm{thick}}|_{\mathrm{row}}\\
        &= \max\left(\Delta(\sG), 2+|g_2|_{\mathrm{row}}\right)
    \end{align}
    By definition, we see that
    \begin{equation}
        |g_2|_{\mathrm{row}} = \max_{e, i} \sum_{f \sim^{G} e} 1\{i=h(f)\}
    \end{equation}
    Consider the bipartite graph consisting of vertices $\sE \sqcup \sF)$ with edges $ef$ if $f \sim^{G} e$.
    Note that the maximum vertex degree of the graph is given by $\max(q_Z(G),\Delta(\sG))$. Hence, by the greedy coloring theorem, we see that if $L=\max(q_Z(G),\Delta(\sG)) +1$, there exists color/height function $h$ such that $|g_2|_{\mathrm{row}}=1$
\end{proof}

\subsection{Low Overhead Measurement}

By the previous section, choose connected graph $\sG$ of constant degree $\Delta(\sG)=O(1)$ with generating simple cycles $\sF$ as in the decongestion lemma \eqref{lem:decongestion} and let $G$ denote the induced complex.
Let $\hat{G}$ be the complex obtained after triangulating and thickening.
Then $\hat{G}^T$ is an LDPC code acting on $O(|\sV| \log^3 |\sV|)$ qubits.
It has trivial 1-(co)homology, and its 0-homology is spanned by the unique basis element
\begin{equation}
    [\sV] \otimes \left[\sR_0 \right]
\end{equation}
Using $\hat{G}^T$ as the ancilla code, we obtain the final result (whose proof is similar to Propositions \eqref{prop:weight-cone}-\eqref{prop:weight-cone-distance} and thus omitted).
\begin{theorem}
    Let complex $A$ of CSS code be given with basis and choose logical operator $\ell_1^\star\in A_1$, regarded as a subset of 1-cells.
    Let $\sG$ be a connected graph such that there exists bijection $g:\sV \to \ell_1^\star$  and $\hat{G}$ be defined with $B\equiv \hat{G}^T$.
    Let $\sG$ be of constant degree $\Delta(\sG)= O(1)$ and let each edge path $\gamma$ defined in Proposition \eqref{prop:weight-cone} be of $O(1)$ length.
    Define $g:B_i \to A_{i-1}$ via
    \begin{align}
        g v\otimes |i\ket &= 1\{i=0\} gv \\
        g e\otimes |i\ket &= 1\{i=0\} \sum_{a_0} a_0 1\{ e\in \Gamma(a_0)\} 
    \end{align}
    where $i$ is an integer.
    Then the sequence $C$ obtained from Diagram \eqref{eq:logical-measurement-diagram} with $B=\hat{G}^T$ is a regular height-1 cone with $H_1(C) \cong H_1(A)/[\ell_1^\star]$ and $O(1)$ weights and has $Z$-type code distance $d_Z$ satisfying
    \begin{equation}
        d_Z \ge \min(h(\sG),1) d_Z^{A}
    \end{equation}
    where $h(\sG)$ is the Cheeger constant of $\sG$ and $d_Z^A$ is the $Z$-type code distance of $A$
\end{theorem}


\bibliographystyle{alpha}
\bibliography{main.bbl}

\appendix
\onecolumngrid
\section{Proof of Main Results}


In standard homological algebra \cite{weibel1994introduction,hatcher2005algebraic}, given a chain map $f:C^{1}_{i} \to C^{0}_{i-1}$, i.e., $f\partial^{1}=\partial^{0}f$, the \textbf{cone}  is defined as the chain complex $\cone(f)$ with $i$ degree $C^{1}_{i} \oplus C^{0}_{i}$ and differential 
\begin{equation}
    \partial^{\cone(f)} =
    \begin{pmatrix}
    \partial^{1} &  \\
    f & \partial^{0} 
    \end{pmatrix}
\end{equation}


\begin{lemma}[Theorem 1.3.1 and Lemma 1.5.3 of Ref. \cite{weibel1994introduction}]
    \label{lem:exact-seq}
    Let $0 \to A \xrightarrow{f} B\xrightarrow{g}  C \to 0$ be a short exact sequence of chain complexes.
    Then there are natural maps $\partial: H_n(C) \to H_{n-1}(A)$, called \textbf{connecting homomorphisms}, such that the following is a long exact sequence
    \begin{equation}
        \cdots \to H_n(A) \xrightarrow{[f]}  H_n(B) \xrightarrow{[g]} H_n(C) \xrightarrow{\partial}  H_{n-1}(A) \to \cdots
    \end{equation}
    Where $[f],[g]$ are induced by $f,g$ via the quotient maps.
    
    In particular, if $f:C^{1}_{i}\to C^{0}_{i-1}$ is a chain map, then $0 \to C^{0} \xrightarrow{\iota} \cone(f) \xrightarrow{\pi} C^{1} \to 0$ is an exact sequence with inclusion $\iota$ and projection $\pi$, and the induced long exact sequence has natural connecting homomorphism $\partial = [f]$, i.e., that induced by $f$ on the homologies.
\end{lemma}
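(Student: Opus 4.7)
The plan is to prove both assertions by the standard zig--zag / snake-lemma construction. For the first part, given $[c]\in H_n(C)$, I would construct $\partial[c]$ as follows: by surjectivity of $g$, lift $c$ to some $b\in B_n$; then $\partial^B b\in B_{n-1}$ satisfies $g(\partial^B b)=\partial^C g(b)=\partial^C c = 0$, so by exactness $\partial^B b \in \im f$, and injectivity of $f$ produces a unique $a\in A_{n-1}$ with $f(a)=\partial^B b$. The element $a$ is a cycle because $f(\partial^A a)=\partial^B f(a)=(\partial^B)^2 b=0$ and $f$ is injective, so I would define $\partial[c] := [a]$.

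The remaining verifications are routine diagram chases. Well-definedness follows because replacing $b$ by $b+f(a')$ changes $a$ by $\partial^A a'$, and modifying $c$ by a boundary can be absorbed into the lift. For exactness at $H_n(B)$, if $g_*[b]=0$ then $g(b)=\partial^C c'$; lifting $c'$ to $b'$ via $g$ shows $b-\partial^B b'\in\ker g = \im f$, exhibiting $[b]$ in the image of $[f]$. Exactness at the other two positions, as well as naturality with respect to morphisms of short exact sequences, proceeds by the same pattern of element chasing through the diagram.

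For the second part, I would first verify that $0\to C^0\xrightarrow{\iota}\cone(f)\xrightarrow{\pi} C^1\to 0$ is a short exact sequence of chain complexes: exactness at each degree is immediate from the direct sum $\cone(f)_i = C^1_i\oplus C^0_i$, and commutation with the differentials follows from the lower-triangular form of $\partial^{\cone(f)}$. I would then compute the connecting homomorphism directly on a cycle $c^1\in C^1_n$: lift via $\pi$ to $(c^1, 0)\in \cone(f)_n$, apply $\partial^{\cone(f)}$ to obtain $(\partial^1 c^1,\, f c^1)=(0,\, f c^1)$ using $\partial^1 c^1=0$, and identify this element with $\iota(f c^1)$. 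Hence $\partial[c^1] = [f c^1] = [f][c^1]$, which gives the claim.

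The only real concern --- rather than a mathematical obstacle, since both statements are attributed directly to Weibel --- is keeping the degree conventions consistent with the paper's $f:C^1_i\to C^0_{i-1}$ shift, so that the induced $[f]:H_i(C^1)\to H_{i-1}(C^0)$ sits in the correct degree to play the role of the connecting map. Once those conventions are fixed, no ideas beyond the classical ones are required.
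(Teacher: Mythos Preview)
The paper does not supply its own proof of this lemma: it is stated with an explicit attribution to Theorem 1.3.1 and Lemma 1.5.3 of Weibel and then used as a black box in the proof of Theorem~\ref{thm:3-lvl}. Your zig--zag construction of the connecting homomorphism, the diagram chases for well-definedness and exactness, and the explicit computation of $\partial$ for the mapping cone are exactly the standard arguments given in Weibel, so your proposal is correct and coincides with what the paper defers to.
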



\begin{proof}[Proof of Theorem \eqref{thm:height-2}]
\label{proof:height-2}
Consider the following diagram,
\begin{equation}
\begin{tikzpicture}[baseline]
\matrix(a)[matrix of math nodes, nodes in empty cells, nodes={minimum size=5mm},
row sep=3em, column sep=3em,
text height=1.5ex, text depth=0.25ex]
{& H_1(\partial^{0})=0 &   & \\
H_2(\partial^{2}) &  H_1\begin{pmatrix} \partial^{1} & \\ g_1 & \partial^{0} \end{pmatrix} & H_1(\partial)  & H_1(\partial^{2})=0 \\
 & H_1(\partial^{1}) & &\\
 & H_0(\partial^{0}) & &\\};
\path[->,font=\scriptsize](a-2-1) edge node[above]{$\left[\begin{pmatrix} g_2\\p \end{pmatrix}\right]$} (a-2-2);
\path[->,font=\scriptsize](a-2-2) edge node[above]{$\left[\iota^{1,0} \right]$} (a-2-3);
\path[->,font=\scriptsize](a-2-3) edge node[above]{$\left[\pi^{2} \right]$} (a-2-4);

\path[->,font=\scriptsize](a-1-2) edge node[right]{$[\iota^{0}]$} ($(a-2-2.north) +(0,.3)$);
\path[->,font=\scriptsize]($(a-2-2.south) +(0,-.3)$) edge node[right]{$[\pi^{1}]$} (a-3-2);
\path[->,font=\scriptsize](a-3-2) edge node[right]{$[g_1]$} (a-4-2);
\end{tikzpicture}
\end{equation}
In the diagram, $\iota^{0}$ is the inclusion $C^{0} \hookrightarrow C^{1} \oplus C^{0}$, while $\pi^{1}$ is the projection operator $C^{1} \oplus C^{0} \to C^{1}$. 
Similarly, $\iota^{1,0}$ is the inclusion $C^{1} \oplus C^{0} \hookrightarrow C^{2}\oplus C^{1} \oplus C^{0}$ and $\pi^{2}$ is the projection operator $C^{2} \oplus C^{1} \oplus C^{0} \to C^{2}$. 
By Lemma \eqref{lem:exact-seq}, we see that the row sequence and column sequences are exact.
This implies that
\begin{equation}
    H_1(\partial) = \im [\iota^{1,0}] \cong H_1\begin{pmatrix} \partial^{1} & \\ g_1 & \partial^{0} \end{pmatrix}/ \im \left[\begin{pmatrix} g_2\\p \end{pmatrix}\right]
\end{equation}
where is isomorphism is induced by $[\iota^{1,0}]$.
Also note that $[\pi^{1}]$ induces the following isomorphism
\begin{align}
    H_1\begin{pmatrix} \partial^{1} & \\ g_1 & \partial^{0} \end{pmatrix} \cong \im [\pi^{1}] = \ker [g_1]
\end{align}
In particular, we see that $[\pi^{1}]$ further induces the isomorphism
\begin{align}
    H_1\begin{pmatrix} \partial^{1} & \\ g_1 & \partial^{0} \end{pmatrix}/ \im \left[\begin{pmatrix} g_2\\p \end{pmatrix}\right] &\cong [\pi^{1}] H_1\begin{pmatrix} \partial^{1} & \\ g_1 & \partial^{0} \end{pmatrix}/ [\pi^{1}] \im \left[\begin{pmatrix} g_2\\p \end{pmatrix}\right] \\
    &= \ker[g_1]/\im[g_2] =H_1(\partial^{\bg})
\end{align}
Carefully tracking the previous isomorphisms provides the isomorphism $H_1(\partial^{\bg}) \to H_1 (\partial)$ and thus the statement follows.
More specifically, let $\llb \ell^{1}_{1} \rrb \in H_1(\partial^{\bg})$ and choose $[\ell^{1}_{1}]\in \ker[g_1] \subseteq H_1(\partial^{1})$ to be a representation with further sub-representation $\ell^{1}_{1} \in \ker \partial^{1} \subseteq C^{1}_{1}$.
Since $[g_1\ell^{1}_{1}] = [g_1][\ell^{1}_{1}] =0$ where $[\cdots]$ denotes the corresponding quotient map, we see that $g_1 \ell^{1}_{1} \in \im \partial^{0}$ and thus there exists $\ell^{0}_{1}$ such that $g_1 \ell^{1}_{1} =\partial^{0} \ell^{0}_{1}$.
Further note that
\begin{equation}
    \begin{pmatrix} 
    \partial^{1} & \\ g_1 & \partial^{0} 
    \end{pmatrix}
    \begin{pmatrix}
    \ell^{1}_{1} \\ \ell^{0}_{1} 
    \end{pmatrix} = 0
\end{equation}
where we used the fact that $\partial^{1} \ell^{1}_{1} =0$. 
Hence, we see that the corresponding equivalence class belongs to
\begin{equation}
    \left[ \begin{pmatrix}
    \ell^{1}_{1} \\ \ell^{0}_{1} 
    \end{pmatrix} \right]\in H_1\begin{pmatrix} \partial^{1} & \\ g_1 & \partial^{0} \end{pmatrix}
\end{equation}
And projects to $[\ell^{1}_{1}]$ via $[\pi^{1}]$. Combined with the inclusion map $[\iota^{1,0}]$, we see that the isomorphism $H_1(\partial^{\bg}) \to H_1(\partial)$ is given by
\begin{equation}
    \llb \ell^{1}_{1} \rrb \mapsto \left[\begin{pmatrix} 0 \\ \ell^{1}_{1} \\ \ell^{0}_{1}\end{pmatrix} \right]
\end{equation}
\end{proof}

\begin{proof}[Proof of Theorem \eqref{thm:height-n}]
\label{proof:height-n}
We shall first prove the statement for the usual height-1 cone, and then apply induction to show that the statement holds for all height-$n$ cones for $n> 1$.
Indeed, in the case of a height-1 cone, first consider the case where $m=0$. By Lemma \eqref{lem:exact-seq}, we see that the following is an exact sequence

\begin{equation}
\begin{tikzpicture}[baseline]
\matrix(a)[matrix of math nodes, nodes in empty cells, nodes={minimum size=5mm},
row sep=3em, column sep=3em,
text height=1.5ex, text depth=0.25ex]
{
H_1 (\partial^{1}) &
H_0 (\partial^{0}) & 
H_0
\begin{pmatrix} 
\partial^{1} & \\ 
g_1 & \partial^{0} 
\end{pmatrix} & 
H_0(\partial^1) = 0  \\};
\path[->,font=\scriptsize]
(a-1-1) edge node[above]{$[g_1]$} (a-1-2)
(a-1-2) edge node[above]{$[\iota^{0}]$} (a-1-3)
(a-1-3) edge node[above]{$[\pi^{1}]$} (a-1-4);
\end{tikzpicture}
\end{equation}
where $\iota^{0}$ is the inclusion map $C^{0} \hookrightarrow C^{1} \oplus C^{0}$ and $\pi^{1}$ is the projection map $C^{1} \oplus C^{0} \to C^{1}$.
By definition, we then see that $[\iota^{0}]$ induces an isomorphism from $\coker[g_1] \subseteq H_0(\partial^0)$ to $H_0(\partial^{\cone{(g_1)}})$ and thus the claim holds for $m=0,n=1$. 
Similarly, consider the case where $m=1$ so that by Lemma \eqref{lem:exact-seq}, the following is an exact sequence
\begin{equation}
\begin{tikzpicture}[baseline]
\matrix(a)[matrix of math nodes, nodes in empty cells, nodes={minimum size=5mm},
row sep=3em, column sep=3em,
text height=1.5ex, text depth=0.25ex]
{
H_1 (\partial^{0}) =0 &
H_1
\begin{pmatrix} 
\partial^{1} & \\ 
g_1 & \partial^{0} 
\end{pmatrix} & 
H_1(\partial^1) &
H_0(\partial^{0})
\\};
\path[->,font=\scriptsize]
(a-1-1) edge node[above]{$[\iota^{0}]$} (a-1-2)
(a-1-2) edge node[above]{$[\pi^{1}]$} (a-1-3)
(a-1-3) edge node[above]{$[g_1]$} (a-1-4);
\end{tikzpicture}
\end{equation}
By definition, we then see that $[\pi^{1}]$ induces an isomorphism from $H_1(\partial^{\cone{(g_1)}})$ onto $\ker[g_1] \subseteq H_1(\partial^1)$ and thus the claim holds for $n=1$.

We now prove the statement using induction on $n$, i.e., assume that the statement holds for height $(n-1)$ cones where $n > 1$, and consider a height-$n$ cone.
Let $\partial =\partial^{[0,n]}$ so that
\begin{equation}
    \partial^{[0,n]} = 
    \begin{pmatrix}
    \partial^{(m,n]} &  &&\\
    \begin{pmatrix} \cdot & g_{m+1}\end{pmatrix} & \partial^{m} && \\
    \cdot & \begin{pmatrix} g_{m} \\ \cdot \end{pmatrix} & \partial^{[0,m)}
\end{pmatrix}
\end{equation}
where $\begin{pmatrix} \cdot & g_{m+1}\end{pmatrix}$ is a $1\times (n-m)$ row matrix of maps and similarly, $\begin{pmatrix}  g_m & \cdot\end{pmatrix}^T$ is a $m\times 1$ column matrix of maps.
By induction, note that
\begin{equation}
    H_{m+1} (\partial^{(m,n]}) \cong \coker [g_{m+2}] \subseteq H_{m+1}(\partial^{m+1})
\end{equation}
where $[g_{m+2}]:H_{m+2}(\partial^{m+2}) \to H_{m+1}(\partial^{m+1})$ is that induced by $g_{m+2}$. 
In the case where $m+2>n$ ($m+1>n$), the convention $C^{m+2} =0$ ($C^{m+1}=0$) is used so that $g_{m+2}$ is the zero map.
Equivalently, the following is an exact sequence
\begin{equation}
    H_{m+2}(\partial^{m+2}) \xrightarrow{[g_{m+2}]} H_{m+1}(\partial^{m+1}) \xrightarrow{[\iota^{m+1}]} H_{m+1}(\partial^{(m,n]}) \to 0
\end{equation}
where $\iota^{m+1}$ is the inclusion map $C^{m+1} \hookrightarrow C^{n} \oplus \cdots C^{m+1}$.
Similarly, note that
\begin{equation}
    H_{m-1} (\partial^{[0,m)}) \cong \ker [g_{m-1}] \subseteq H_{m-1}(\partial^{m-1})
\end{equation}
where $[g_{m-1}]:H_{m-1}(\partial^{m-1}) \to H_{m-2}(\partial^{m-2})$ is that induced by $g_{m-1}$ and an analogous convention is used if $m-1<0$ or $m-2 < 0$.
Equivalently, the following is an exact sequence
\begin{equation}
    0 \to H_{m-1}(\partial^{[0,m)}) \xrightarrow{[\pi^{m-1}]} H_{m-1}(\partial^{m-1}) \xrightarrow{[g_{m-1}]} H_{m-2}(\partial^{m-2})
\end{equation}
where $\pi^{m-1}$ is the projection operator $C^{m-1} \oplus \cdots C^{0} \to C^{m-1}$.
Similarly, induction also implies
\begin{equation}
    H_m (\partial^{(m,n]}) = H_m (\partial^{[0,m)}) = 0
\end{equation}
Hence, by Theorem \eqref{thm:height-2}, we see that
\begin{align}
    H_m(\partial) &\cong \ker \left[\begin{pmatrix} g_{m} \\ \cdot \end{pmatrix}\right]/\im [\begin{pmatrix} \cdot & g_{m+1}\end{pmatrix}]\\
    &= \ker \left( [\pi^{m-1}]\left[\begin{pmatrix} g_m \\ \cdot \end{pmatrix}\right]\right)/\im \left([\begin{pmatrix} \cdot & g_{m+1}\end{pmatrix}] [\iota^{m+1}] \right) \\
    &= \ker [g_m]/\im [g_{m+1}] \\
    &= H_m(\partial^{\bg})
\end{align}
where we also used the previous exact sequences.
\end{proof}

\begin{proof}[Proof of Cleaning Lemma \eqref{lem:cleaning}]
\label{proof:cleaning}
By definition (see Example \eqref{ex:CSS}), 
\begin{equation}
    d_Z = \min_{\ell_1\in C_1:0\ne \llb \ell_1^1\rrb \in H_1(\partial)} |\ell_1|
\end{equation}
By Theorem \eqref{thm:height-2}, if $\ell_1$ is a representation of some $\Gamma_1 \in H_1(\partial)\backslash 0$, then it must be equivalent to $\ell_1^1 \oplus \ell_1^0$ with respect to $\partial$ where $\ell_1^1$ is a sub-representation of $\gamma_1 \in H_1(C^{\bg})\backslash 0$, $\ell_1^0$ is such that $g_1\ell_1^1 +\partial^0\ell_1^0=0$ and $\Gamma_1 \mapsto \gamma_1$ via the isomorphism.
In particular, we see that there exists $\ell_2\in C_2$ such that
\begin{equation}
    \ell_1 =
    \begin{pmatrix} 
    0\\
    \ell^{1}_{1} \\
    \ell^{0}_{1} 
    \end{pmatrix} +\partial \ell_{2} = 
    \begin{pmatrix} 
    0\\
    \ell^{1}_{1} \\
    \ell^{0}_{1} 
    \end{pmatrix}
    +
    \begin{pmatrix} 
    \partial^{2} \ell^{2}_{2}\\
    g_2\ell^{2}_{2} +\partial^{1}\ell^{1}_{2}\\
    p\ell^{2}_{2} +g_1\ell^{1}_{2} +\partial^{0}\ell^{0}_{2} 
    \end{pmatrix}
\end{equation}
Hence, we have
\begin{equation}
    d_Z \ge \min_{\ell_1^1\in C_1^1: 0\ne \llb \ell_1^1\rrb \in H_1(\partial^{\bg})} \min_{\ell_2^2,\ell_2^1}(|\partial^2\ell_2^2| + |\ell_1^1+\partial^1\ell_2^1  +g_2\ell_2^2|)
\end{equation}
Note that $[\ell_1^1 +\partial^1 \ell_2^1] = [\ell_1^1]$ in $C^1$ and thus we have
\begin{align}
    d_Z &\ge \min_{\ell_1^1\in C_1^1: 0\ne \llb \ell_1^1\rrb \in H_1(\partial^{\bg})} \min_{\ell_2^2} (|\partial^2\ell_2^2| +|\ell_1^1  +g_2\ell_2^2|)\\
    &\ge \alpha \min_{\ell_1^1\in C_1^1: 0\ne \llb \ell_1^1\rrb \in H_1(\partial^{\bg})} \min_{\ell_2^2} \left(\frac{1}{\alpha} |\partial^2\ell_2^2| +|\ell_1^1  +g_2\ell_2^2| \right)
\end{align}
By the isoperimetric inequality \eqref{eq:isoperimetric}, we see that for any $\ell_2^2$, there exists $\hat{\ell}_2^2$ such that $[\ell_2^2 - \hat{\ell}_2^2] \in H_2(\partial^2)$ and thus $[g_2(\ell_2^2 - \hat{\ell}_2^2)]\in H_1(\partial^1)$. 
Hence, by the triangle inequality,
\begin{equation}
    d_Z \ge \alpha \min_{\ell_1^1\in C_1^1: 0\ne \llb \ell_1^1\rrb \in H_1(\partial^{\bg})} \min_{\ell_2^2} |\ell_1^1+g_2(\ell_2^2 - \hat{\ell}_2^2)| 
\end{equation}
Since $\llb \ell_1^1\rrb =\llb \ell_1^1 +g_2(\ell_2^2-\hat{\ell}_2^2)\rrb$ in $H_1(\partial^{\bg})$, we see that the statement follows, i.e.,
\begin{align}
    d_Z &\ge \alpha \min_{\ell_1^1\in C_1^1: 0\ne \llb \ell_1^1\rrb \in H_1(\partial^{\bg})} \min_{\ell_2^2} |\ell_1^1| \\
    &= \alpha \min_{\ell_1^1\in C_1^1: 0\ne \llb \ell_1^1\rrb \in H_1(\partial^{\bg})}  |\ell_1^1|
\end{align}
In the particular case where $C^{2} =\bigoplus_a C^{2,a}$ is the direct sum of complexes, we see that
\begin{equation}
    |\partial^2 \ell_2^2| =\sum_{a} |\partial^{2,a} \ell^{2,a}_2|
\end{equation}
where $\ell_2^{2,a}$ is the projection of $\ell^2_2$ onto the component $C^{2,a}_2$. 
Hence, one can apply the triangle inequality and the isoperimetric inequality as before so that the statement follows.
\end{proof}

\section{Proof of Topological Codes}

\subsection{Proof of 2D Toric Code}
\begin{proof}[Proof of Theorem \eqref{thm:toric-honeycomb}]
    \label{proof:toric-honeycomb}
    By Fig. \ref{fig:toric-honey}, it's straightforward to check that $C$ (with levels $C^{2},C^{1},C^{0}$) is the toric code on the honeycomb lattice and thus a complex.
    One may also verify that $C$ is a complex directly by expanding the gluing procedure as follows (omitting 0s and the $\otimes$ sign for simplicity)
    \begin{equation}
    \label{eq:honeycomb-diagram}
    \begin{tikzpicture}[baseline]
    \matrix(a)[matrix of math nodes, nodes in empty cells, nodes={minimum size=20pt},
    row sep=2em, column sep=4em,
    text height=1.5ex, text depth=0.25ex]
    {
    & X_1Y_1\\
    &  X_1 Y_0 \oplus X_0 Y_1\\
    X_0 Y_0  R_1(2) &  X_0 Y_0 R_0(2) \\};
    \path[->,font=\scriptsize]
    (a-1-2) edge node[right]{$g_2$} (a-2-2)
    (a-2-2) edge node[right]{$g_1$} (a-3-2)
    (a-3-1) edge node[above]{$\partial^0 \equiv \partial^{R(2)}$}(a-3-2)
    (a-1-2) edge[bend right,dashed] node[left]{$p$} (a-3-1);
    \end{tikzpicture}
    \end{equation}
    Since the diagram has no squares (excluding the dashed line), it is trivially a commutative diagram.
    It's straightforward to check that $g_1 g_2 = \partial^0 p$ and thus $C$ is a complex.
    Note that trivially, $H_2(C^{2}) = \Tor_{2}^{\sq}$ and $H_1(C^{1}) = \Tor_{1}^{\sq}$ and thus the induced mapping by $g_2$ on the embedded code is merely $\partial_2^{\sq}$.
    By Lemma \eqref{lem:rep}, we note that $H_0(C^{0}) \cong \Tor_{0}^{\sq}$ and thus we can label the basis elements of $H_0(C^{0})$ via $[\|x_0 y_0\ket] \equiv [|x_0 y_0,i\ket]$ for any $i=1,2$.
    Moreover, note that
    \begin{align}
        [g_1] [\|x_1 y_0\ket] &= [|x_1^+ y_0 2\ket +|x_1^- y_0 1\ket] = [\|x_1^+ y_0 \ket] +[\|x_1^- y_0 \ket] \\
        [g_1] [\|x_0 y_1\ket] &= [|x_0 y_1^+ 1\ket +|x_0 y_1^- 2\ket] = [\|x_0 y_1^+ \ket] +[\|x_0 y_1^- \ket]
    \end{align}
    Hence, $[g_1] = \partial^{\sq}_{1}$ and thus $\Tor^{\sq}$ is the embedded code in $C$
\end{proof}

\begin{proof}[Proof of Theorem \eqref{thm:toric-triangular}]
    \label{proof:toric-triangular}
    By Fig. \ref{fig:toric-triangle}, it's straightforward to check that $C$ (with levels $C^{2},C^{1},C^{0}$) is the toric code on the honeycomb lattice and thus a complex.
    The statement follows from Theorem \eqref{thm:height-2} and Lemma \eqref{lem:rep}. and expanding the gluing procedure as follows (omitting 0s and the $\otimes$ sign for simplicity)
    \begin{equation}
    \label{eq:triangle-diagram}
    \begin{tikzpicture}[baseline]
    \matrix(a)[matrix of math nodes, nodes in empty cells, nodes={minimum size=20pt},
    row sep=2em, column sep=4em,
    text height=1.5ex, text depth=0.25ex]
    {
    X_1Y_1 R_0(2) & X_1 Y_1 R_1(2)\\
    X_1 Y_0 \oplus X_0 Y_1 &\\
    X_0 Y_0   &   \\};
    \path[->,font=\scriptsize]
    (a-1-1) edge node[right]{$g_2$} (a-2-1)
    (a-2-1) edge node[right]{$g_1$} (a-3-1)
    (a-1-1) edge node[above]{$\partial^2 \equiv (\partial^{R(2)})^T$}(a-1-2)
    (a-1-2) edge[bend left,dashed] node[below]{$p$} (a-3-1);
    \end{tikzpicture}
    \end{equation}
\end{proof}

\subsection{Proof of Barycentric Subdivision}

\begin{proof}[Proof of Theorem \eqref{thm:barycentric-subdivision}]
    \label{proof:barycentric-subdivision}
    We shall prove by induction.
    Indeed, if $\Delta$ is a finite simplicial complex of length 0, the statement is trivially true, and thus we shall assume that the statement is true for simplicial complexes of length $n-1$ where $n\ge 1$, and consider the simplicial complex $\Delta$ of length $n$.
    Given an $s$-simplex $\| s\ket$ of $\Delta$ with $s\le n$, let $\Delta^{\| s\ket}$ be the induced simplicial subcomplex with differential $\partial^{\| s\ket}$ and let 
    \begin{equation}
        \Delta^{\partial\| s\ket} = \Delta_{s-1}^{\| s\ket} \to \cdots \to \Delta_{0}^{\| s\ket}
    \end{equation}
    be the further subcomplex, which represents the simplicial complex induced by the \textit{boundary} of the $s$-simplex.
    Then it's straightforward to check that
    \begin{equation}
        C^{s} \cong  \bigoplus_{\| s\ket} (C(\Delta^{\partial \| s\ket}) \to \bra \varnothing\ket)
    \end{equation}
    where $C(\Delta^{\partial \| s\ket})$ is the barycentric subdivison of $\Delta^{\partial \| s\ket}$, augmented with the map that maps a $0$-simplex of $C(\Delta^{\partial \| s\ket})$ to the unique basis element of $\bra \varnothing\ket = \dF_2$.
    (Since the differential map of simplicial complexes removes a vertex from a simplex, the augmented differential map follows this reasoning and maps any vertex to the empty set $\varnothing$).
    By induction, we see that $C(\Delta^{\partial\| s\ket})$ and $\Delta^{\partial\| s\ket}$ are quasi-isomorphic.
    Since $C(\Delta^{\partial \|s\ket}), \Delta^{\partial \|s\ket}$ have the same 0-cells, the augment complexes $C(\Delta^{\partial\| s\ket}) \to \bra \varnothing\ket$ and $\Delta^{\partial\| s\ket} \to \bra \varnothing\ket$ are quasi-isomorphic, and thus we only need to consider the homologies of the latter\footnote{The homologies of the augmented complex are often referred as the \textbf{reduced homologies} \cite{hatcher2005algebraic}}.
    In fact, since $\Delta^{\| s\ket}$ only has one $s$-simplex, it's sufficient to show that the the homology of the augmented complex $\Delta^{\| s\ket} \to \bra \varnothing\ket$, is zero, that is, $\tilde{H}_m (\Delta^{\| s\ket}) \equiv H_m(\Delta^{\| s\ket} \to \bra \varnothing\ket)=0$ for all $m$.

    Let $|s\ket,...|0\ket$ be vertices of the $s$-simplex $\| s\ket$ and consider the map $\pi_i:\Delta^{\| s\ket}_{i} \to \Delta^{\| s\ket}_{i}$ such that $\pi_i =0$ for $i> 0$ and $\pi_0$ maps all vertices of the $s$-simplex to $|s\ket$.
    It's clear that $\pi$ is a chain map on the simplicial complex $\Delta^{\| s\ket}$.
    Now define $p$ as the map which maps any $m$-simplex $|\bm{\ell}\ket \equiv |\ell_{m}>\cdots >\ell_{0}\ket$ of $\Delta^{\| s\ket}$ to $|s,\bm{\ell}\ket \equiv |s,\ell_{m},...,\ell_{0}\ket$ if $s>\ell_{m}$ and to $0$ otherwise. 
    Then it's straightforward to check that $I+\pi = \partial^{\| s\ket} p + p\partial^{\| s\ket}$ so that $\pi$ is chain-homotopic to the identity.
    Therefore, the reduced homology must always be zero.
    Finally, one can check that
    \begin{equation}
        \partial^{s} [\| s\ket] =0, \quad [\| s\ket]\equiv \sum_{\|m\ket,m<s: \| s\ket \sim^{\Delta} \| s-1\ket \sim^{\Delta} \cdots \sim^{A} \| 0\ket}\|s,s-1,...,0\ket
    \end{equation}
    by utilizing the fact that if $\|m+1\ket \sim^{\Delta} \|m-1\ket$ are given, then there are exactly two $m$-simplices such that $\|m+1\ket \sim^{\Delta} \|m\ket \sim^{\Delta} \|m-1\ket$
\end{proof}

\section{Proof of Euclidean Embedding}
\subsection{Proof of Layer Code}

\begin{proof}[Proof of Theorem \eqref{thm:layer-code}]
\label{proof:layer-code}
Our goal is to check that the lower triangular matrix defined in Theorem \eqref{thm:height-2} satisfies $\partial \partial =0$.
It's then sufficient to show that (see Diagram \eqref{eq:height-2-diagram})
\begin{align}
    g_2 \partial^{2} &= \partial^{1} g_2 \\
    g_1 \partial^{1} &= \partial^{0} g_1 \\
    g_1 g_2 &=  \partial^{0} p+ p\partial^{2}
\end{align}
Where the last equality implies that $p$ is a chain-homotopy and $g_2 g_1$ is chain-homotopic to 0.
Indeed, note that
\begin{align}
    g_2\partial^{2} |y_0 z_0;x_0\ket &= g_2 \left[\sum_{|y_1\ket \sim^{Y} |y_0\ket }|y_1 z_0;x_0\ket + \sum_{|z_1\ket\sim^{Z} |x_0\ket} |y_0 z_1;x_0\ket \right]\\
    &= \sum_{|z_1\ket\sim^{Z} |z_0\ket} 1\{|y_0\ket \sim^{A} |x_0\ket \} |x_0 z_1;y_0\ket \\
    &=\partial^{1} \left[ 1\{|y_0\ket \sim^{A} |x_0\ket\} |x_0 z_0;y_0\ket \right] \\
    &= \partial^{1} g_2|y_0 z_0;x_0\ket 
\end{align}
Similarly, note that
\begin{align}
    g_1 \partial^{1} |x_1 z_0;y_0\ket &= g_1 \left(\sum_{|x_0\ket \sim^{X} |x_1\ket} |x_0 z_0;y_0\ket +\sum_{|z_1\ket \sim^{Z} |z_0\ket} |x_1 z_1;y_0\ket\right) \\
    &= \sum_{|x_0\ket \sim^{X} |x_1\ket} 1\{|y_0\ket \sim^{A} |z_0\ket\}|x_0 y_0 ;z_0\ket \\
    &= \partial^{0}  |x_1 y_0;z_0\ket 1\{|y_0\ket \sim^{A} |z_0\ket\}  \\
    &= \partial^{0} g_1 |x_1 z_0;y_0\ket
\end{align}
Finally, let us show that $p$ is a chain-homotopy with respect to $g_1 g_2$ and $0$.
Indeed, note that
\begin{align}
    \partial^{0} p |y_0 z_0;x_0\ket &= \partial^{0} \left(1\{|y_0\ket \in Y[|x_0\ket\wedge |z_0\ket)\} |x_0 y_0^+;z_0\ket \right) \\
    &= 1\{|y_0\ket \in Y[|x_0\ket\wedge |z_0)\} \left( |x_0 y_0;z_0\ket +|x_0 (y_0 +1);z_0\ket \right)
\end{align}
And that
\begin{align}
    p\partial^{2} |y_0 z_0;x_0\ket &= p\left(\sum_{|y_1\ket \sim^{Y} |y_0\ket }|y_1 z_0;x_0\ket + \sum_{|z_1\ket\sim^{Z} |x_0\ket} |y_0 z_1;x_0\ket \right) \\
    &= p \left( |y_0^- z_0;x_0\ket +|y_0^+ z_0;x_0\ket\right) \\
    &= 1\{|y_0^-\ket \in Y[|x_0\ket\wedge |z_0)\} |x_0 y_0;z_0\ket + 1\{|y_0^+\ket \in Y[|x_0\ket\wedge |z_0)\} |x_0 (y_0+1) ;z_0\ket
\end{align}
Note that if $|y_0\ket$ is not a start or end of a string defect, then $|y_0^\pm\ket$ are either both in or both not in the string defect. 
Conversely, if $|y_0\ket \in Y[|x_0\ket \wedge |z_0\ket)$ is a start or end of a string defect, then one and only of $|y_0^\pm\ket$ is in the string defect. 
Therefore,
\begin{align}
    (\partial^{0} p +p\partial^{2})|y_0 z_0;x_0\ket &= 1\{|x_0\ket \sim^{A} |y_0\ket \sim^{A} |z_0\ket \} |x_0 y_0;z_0\ket \\
    &= g_1 g_2 |y_0 z_0;x_0\ket
\end{align}
Hence, $C$ is a complex, i.e., $\partial \partial =0$.

Next, we need to show that the embedded CSS code $C^{\bg}$ induced by quotients $[g_2],[g_1]$ is indeed the original CSS code $A$.
By Lemma \eqref{lem:rep} and the K\"unneth formula \eqref{lem:Kunneth}, we see that $C$ is a regular cone and that
\begin{enumerate}[label=\arabic*)]
    \item For fixed $x_0$, $H_2(Y^T\otimes Z^T) \cong \dF_2$ has unique basis element
    \begin{equation}
        \label{eq:layer-code-H2}
        \left[\|x_0\ket\right], \quad \|x_0\ket \equiv\sum_{y_0,z_0} |y_0 z_0;x_0\ket
    \end{equation}
    so that $[\|x_0\ket],x_0=1,...,n_Z$ form a basis for $H_2(\partial^{2})$.
    \item For fixed $y_0$, $H_1(X\otimes Z^T) \cong \dF_2$ has unique basis element
    \begin{equation}
        \label{eq:layer-code-H1}
        \left[ \|y_0\ket \right], \quad \|y_0\ket \equiv \sum_{z_0} |x_0 z_0;y_0\ket
    \end{equation}
    where $x_0$ can be chosen arbitrarily, so that $[\|y_0\ket],y_0=1,...,n_Z$ form a basis for $H_1(\partial^{1})$.
    Note the summation is over $z_0$, which can be thought of a string operator in the $z$-direction (compare with Fig. \ref{fig:toric-alter}).
    \item For fixed $z_0$, $H_0(X\otimes Y) \cong \dF_2$ has unique basis element, which can be represented by $\|z_0\ket \equiv |x_0 y_0;z_0\ket$ where we have chosen $x_0,y_0$ arbitrarily, and thus $[\|z_0\ket]$ form a basis for $H_0(\partial^0)$
\end{enumerate}
Hence, we see that
\begin{align}
    [g_2][\|x_0\ket] &= [g_2] \left[\sum_{y_0,z_0} |y_0 z_0;x_0\ket\right] \\
    &= \left[g_2 \sum_{y_0,z_0} |y_0 z_0;x_0\ket \right] \\
    &= \sum_{y_0} 1\{|y_0\ket \sim^{A} |x_0\ket\} \left[\sum_{z_0} |x_0 z_0; y_0 \ket\right] \\
    & = \sum_{y_0} 1\{|y_0\ket \sim^{A} |x_0\ket\} [\|y_0\ket]
\end{align}
The case is similar for $[g_1]$ and the embedded code $C^{\bg}$ is exactly the original CSS code $A$
\end{proof}

\begin{proof}[Proof of Theorem \eqref{thm:layer-code-distance}]
\label{proof:layer-code-distance}
We shall prove the lower bound for $d_Z$ --  the case for $d_X$ is similar and thus omitted. 
Indeed, following the proof of the Cleaning Lemma \eqref{lem:cleaning}, we have
\begin{align}
    d_Z &\ge \frac{2}{w_Z} \min_{\ell_1^1\in C_1^1: \llb \ell_1^1\rrb \in H_1(A)\backslash 0} \min_{\ell_2^2} \left(\frac{w_Z}{2} |\partial^2\ell_2^2| +|\ell_1^1  +g_2\ell_2^2| \right)
\end{align}
Let $\ell_2^2(x_0 z_0)$ denote the collection of $|y_0 z_0;x_0\ket \in \ell_2^2$ for given $x_0,z_0$, i.e., the $(x_0,z_0)$-projection of $\ell_2^2$, and similarly define $\ell_1^1(x_0 z_0)$.
If $\delta^{Y} =(\partial^{Y})^T$ is the codifferential of $Y$, then we have
\begin{equation}
    \frac{w_z}{2} |\partial^2\ell_2^2| +|\ell_1^1  +g_2\ell_2^2| \ge \sum_{x_0,z_0} \left(\frac{w_Z}{2} |\delta^{Y} \ell_2^2(x_0 z_0)| + |\ell_1^1 (x_0 z_0) +g_2 \ell_2^2(x_0z_0)| \right)
\end{equation}
Note that by definition of $g_2$, we have $|g_2 \ell_2^2(x_0z_0)| \le w_Z$. In fact, either $|g_2 \ell_2^2(x_0z_0)|$ or $|g_2 (\ell_2^2(x_0 z_0) + \dnum (x_0z_0))| \le w_Z/2$ where
\begin{equation}
    \dnum (x_0z_0) \equiv \sum_{y_0} |y_0 z_0;x_0\ket 
\end{equation}
Note that $\dnum(x_0,z_0)$ is the $z_0$-projection of $\|x_0\ket$ defined in Eq. \eqref{eq:layer-code-H2}.
Note that $\ell_2^2(x_0 z_0) \subseteq  \dnum (x_0z_0)$ as subsets and thus $\ell_2^2(x_0 z_0) + \dnum (x_0z_0)$ can be regarded as the \textit{complement} of $\ell_2^2(x_0 z_0)$.
Note that the boundary term is invariant under complement, i.e.,
\begin{equation}
    |\delta^{Y} \ell_2^2(x_0z_0)| =|\delta^{Y} (\ell_2^2(x_0z_0) +\dnum(x_0 z_0))| 
\end{equation}
By the triangular inequality, we have
\begin{equation}
    \frac{w_Z}{2} |\delta^{Y} \ell_2^2(x_0z_0)| + |\ell_1^1(x_0z_0) +g_2 \ell_2^2(x_0z_0)| \ge \min\left(|\ell_1^1(x_0z_0)|, |\ell_1^1(x_0z_0) + g_2 \dnum(x_0z_0)| \right)
\end{equation}
Therefore, we have
\begin{equation}
    d_Z \ge \frac{2}{w_Z} \min_{\ell_1^1\in C_1^1: \llb \ell_1^1\rrb \in H_1(A)\backslash 0} \sum_{x_0,z_0} \min\left(|\ell_1^1(x_0z_0)|, |\ell_1^1(x_0z_0) + g_2 \dnum(x_0z_0)| \right)
\end{equation}
Fix $\ell_1^1$ and $z_0$. We claim that
\begin{equation}
    \sum_{x_0} \min\left(|\ell_1^1(x_0z_0)|, |\ell_1^1(x_0z_0) + g_2 \dnum(x_0z_0)| \right) \ge \min_{\ell_2^A \in A_2} |\ell_1^1(z_0) +g_2 \dnum(\ell_2^A,z_0)|
\end{equation}
where $\dnum(\ell_2^A,z_0)$ is the sum over $\dnum(x_0,z_0)$ where $|x_0\ket \in \ell_2^{A}$.
Indeed, start with $\ell_2^{A}$ as the emptyset. For every $x_0$ on the left-hand-side (LHS), determine whether $|\ell_1^1(x_0z_0)|$ or $|\ell_1^1(x_0z_0)+g_2 \dnum(x_0 z_0)|$ is the smaller value. 
If the former, do nothing; otherwise, add $|x_0\ket$ to $\ell_2^{A}$.
It's then straightforward to check that for the constructed $\ell_2^{A}$ (depending on $\ell_1^1,z_0$), we have
\begin{equation}
    \sum_{x_0} \min\left(|\ell_1^1(x_0z_0)|, |\ell_1^1(x_0z_0) + g_2 \dnum(x_0z_0)| \right)  \ge |\ell_1^1(z_0) +g_2 \dnum(\ell_2^A,z_0)|
\end{equation}
And thus the claim follows.
Using the fact that $g_2 \dnum(\ell_2^A) \in \ker \partial^1$ and induces the same equivalence class in $H_1(A)$ as $\ell_1^1$, i.e., $\llb \ell_1^1\rrb = \llb \ell_1^1 +g_2 \dnum(\ell_2^A)\rrb$, we have
\begin{align}
    d_Z &\ge \frac{2}{w_Z} \min_{\ell_1^1\in C_1^1:\llb \ell_1^1\rrb \in H_1(A)\backslash 0} \sum_{z_0} \min_{\ell_2^A \in A_2} |\ell_1^1(z_0) +g_2 \dnum(\ell_2^A,z_0)|\\
    &\ge \frac{2}{w_Z}\sum_{z_0} \min_{\ell_1^1\in C_1^1:\llb \ell_1^1\rrb \in H_1(A)\backslash 0} \min_{\ell_2^A \in A_2} |\ell_1^1(z_0) +g_2 \dnum(\ell_2^A,z_0)| \\
    &\ge \frac{2}{w_Z}\sum_{z_0} \min_{\ell_1^1\in C_1^1:\llb \ell_1^1\rrb \in H_1(A)\backslash 0} |\ell_1^1(z_0)|
\end{align}
Note that if $|\ell_1^1(z_0)| \ge |[\ell_1^1]|$ for all $z_0$ where $|[\ell_1^1]|$ is the weight of $[\ell_1^1]$ treated as an element in $A_1$, the statement then follows and thus it's sufficient to prove the inequality. Indeed, given $\ell_1^1$ with $\ell_1^A = [\ell_1^1]$, we see that there exists $\ell_2^1 \in C_2^1$ such that
\begin{equation}
    \ell_1^1 = \sum_{y_0} 1\{|y_0\ket \in \ell_1^A\} \|y_0\ket+\partial^1 \ell_2^1
\end{equation}
where $\|y_0\ket$ is defined in Eq. \eqref{eq:layer-code-H1}. Hence,
\begin{align}
    \ell_1^1(z_0) &= \sum_{y_0} 1\{|y_0\ket \in \ell_1^A\} |x_0 z_0;y_0\ket + \partial^1 \sum_{y_0,x_1} |x_1 z_0;y_0\ket 1\{|x_1 z_0;y_0\ket \in \ell_2^1\} \\
    |\ell_1^1(z_0)| &\ge \sum_{y_0} 1\{|y_0\ket \in \ell_1^A\} ||x_0\ket + \partial^{X}\ell_2^1(y_0z_0) | \\
    &\ge |\ell_1^A|
\end{align}
where we treat $\ell_2^1(y_0z_0)$ as a subset of 1-cells $|x_1\ket$ in $X$.
\end{proof}
\twocolumngrid

\end{document}